\newcommand{\norm}[1]{\left\lVert#1\right\rVert}
\newcommand{\R}{{\mathbb{R}}}
\newcommand{\Y}{{\textbf{Y}}}
\newcommand{\y}{{\textbf{y}}}
\newcommand{\X}{{\mathbf{X}}}
\newcommand{\W}{{\textbf{V}}}
\newcommand{\vo}{{\mathbf{v}}}
\newcommand{\T}{{\mathbf{T}}}
\newcommand{\So}{{\mathbf{S}}}
\newcommand{\con}{{\mathcal{o}}}
\newcommand{\U}{{\mathbf{U}}}
\newcommand{\cmark}{\ding{51}}%
\newcommand{\xmark}{\ding{55}}%
\newtheorem{theorem}{Theorem}[section]
\newtheorem{assumption}{Assumption}
\newtheorem{definition}[theorem]{Definition}
\newtheorem{lemma}[theorem]{Lemma}
\newtheorem{remark}[theorem]{Remark}
\newtheorem{problem}[theorem]{Problem}
\newtheorem{proof}[theorem]{Proof}
\title{Spatiotemporal Tubes based Control of Unknown Multi-Agent Systems for Temporal Reach-Avoid-Stay Tasks
\thanks{$^\ddag$Authors contributed equally.}
\thanks{ A. Basu, R. Das, and P. Jagtap  are with the Robert Bosch Centre for Cyber-Physical Systems, IISc, Bangalore, India {\tt\footnotesize \{ahanbasu,ratnangshud,pushpak\}@iisc.ac.in}}
}
\author{
 Ahan Basu$^{\ddag}$ \\
  Robert Bosch Centre for Cyber-Physical Systems\\
  IISc, Bengaluru, India\\
  \texttt{siddharthau@iisc.ac.in} \\
   \And
 Ratnangshu Das$^{\ddag}$ \\
  Robert Bosch Centre for Cyber-Physical Systems\\
  IISc, Bengaluru, India\\
  \texttt{ratnangshud@iisc.ac.in} \\
   \And
 Pushpak Jagtap \\
  Robert Bosch Centre for Cyber-Physical Systems\\
  IISc, Bengaluru, India\\
  \texttt{pushpak@iisc.ac.in} \\
}
\begin{document}
\maketitle

\begin{abstract}
The paper focuses on designing a controller for unknown dynamical multi-agent systems to achieve temporal reach-avoid-stay tasks for each agent while preventing inter-agent collisions. The main objective is to generate a spatiotemporal tube (STT) for each agent and thereby devise a closed-form, approximation-free, and decentralized control strategy that ensures the system trajectory reaches the target within a specific time while avoiding time-varying unsafe sets and collisions with other agents. In order to achieve this, the requirements of STTs are formulated as a robust optimization problem (ROP) and solved using a sampling-based scenario optimization problem (SOP) to address the issue of infeasibility caused by the infinite number of constraints in ROP. The STTs are generated by solving the SOP, and the corresponding closed-form control is designed to fulfill the specified task. Finally, the effectiveness of our approach is demonstrated through two case studies, one involving omnidirectional robots and the other involving multiple drones modelled as Euler-Lagrange systems.
\end{abstract}

\section{Introduction}
Over the past few decades, control of multi-agent systems (MAS) has gained significant attention in the control community. These systems have a wide range of applications, from satellite formation flying \cite{ahn2012} to power grids \cite{mokhtari2013}, making coordination crucial to handle collision situations. Distributed control has emerged as one of the most effective strategies for multi-agent systems, with successful applications in formation control \cite{chen2019control} and tracking tasks \cite{Distributed}. However, these techniques are generally unsuitable for reach-avoid-stay (RAS) specifications with temporal constraints.

The RAS specifications \cite{Meng1} have become increasingly important for autonomous systems, particularly in ensuring that these systems reach a desired target set from a specific initial set while avoiding unsafe areas and adhering to state constraints. These specifications are critical for applications in safety-critical systems such as self-driving cars and unmanned aerial vehicles. They also serve as foundational elements for more complex requirements \cite{Kloetzer,NAHS,jagtapCBC}, making the development of safe and reliable control strategies for these specifications essential. In this work, in addition to RAS tasks, we also take into account additional temporal constraints where the system trajectories should reach the target region within a specific time, and unsafe regions are time-dependent. We consider that each agent in MAS has been assigned such temporal RAS tasks independently, in addition to ensuring collision avoidance between agents. Handling such temporal RAS tasks in MAS settings is particularly challenging due to the need for scalable, decentralized control laws that ensure agent-level specification satisfaction while enforcing inter-agent safety.

To synthesize the controller for the RAS tasks, symbolic control techniques \cite{tabuda} have become an important tool due to their ability to provide formal guarantees. These approaches have been implemented for satisfying task specifications in complex multi-agent systems \cite{Symbolic_MA}; however, the curse of dimensionality remains a significant challenge. Some studies have integrated control barrier functions (CBFs) \cite{CBF,jagtapCBC} within a symbolic control framework to address scalability issues in MAS \cite{CBF_Symbolic_Multi}. Despite these efforts, the computational cost still increases exponentially with the number of agents and the complexity of the overall system. Additionally, these techniques struggle to handle temporal constraints, such as ensuring prescribed-time reachability or avoidance of time-varying unsafe sets, which are critical for safety in real-world autonomous systems.
Recently, quadratic programming-based CBF, a discretization-free approach, has been used to deal with a class of temporal constraints in multi-agent systems \cite{lindemann2020barrier}. However, these techniques often rely on exact knowledge of system dynamics, which can be challenging in practical scenarios.

Funnel-based control \cite{PPC1} offers an approximation-free closed-form control law that ensures the trajectories remain within predefined funnel constraints. This technique has been adapted for multi-agent systems to handle a class of spatio-temporal logic tasks \cite{lindemann2019feedback, Funnel_STL_MAS}. Although it effectively handles specifications defined using convex predicates, such as reachability and tracking, it does not generalize well to specifications with nonconvex predicates, such as avoiding unsafe sets \cite{Funnel_STL, funnelarxiv}. To address prescribed-time reach while avoid specifications, the spatiotemporal tube (STT) approach \cite{STT, das2025control, das2025spatiotemporal, faruqui2025reach}, defined using continuously differentiable time-varying tube functions, is proposed for a single agent control-affine dynamics. 

Considering the aforementioned research gaps, this paper aims to design a decentralized, approximation-free, and closed-form control law for unknown multi-agent systems with bounded disturbance to achieve temporal reach-avoid-stay (T-RAS) tasks by each agent while preventing inter-agent collisions. To solve these problems, we introduce a sampling-based approach, inspired by \cite{MCBC, FV_DD}, to generate the STTs that meet T-RAS specifications for each agent. To ensure collision avoidance, we introduce an additional constraint to ensure that the STTs of each agent do not overlap. Our approach begins by formulating the conditions of STTs as a robust optimization problem (ROP). We then sample points in time and time-dependent unsafe regions to develop a scenario optimization problem (SOP) associated with the ROP. Solving the SOP enables the construction of STTs that capture both the T-RAS and collision avoidance specifications, with a correctness guarantee of 1. Following this, we design a closed-form, approximation-free, and decentralized control law to constrain the state trajectories of each agent within the obtained STTs, ensuring the achievement of both objectives. This framework provides a scalable solution to high-dimensional multi-agent problems under uncertainty and addresses limitations in prior work by simultaneously ensuring temporal task completion and safety. The effectiveness of our approach is demonstrated through various case studies.

The paper begins with the definition of multi-agent systems and problem formulation in Section \ref{preliminaries}. Section \ref{data_driven} introduces a sampling-based approach for constructing STTs by solving scenario optimization problems. The following section uses the obtained STTs to design closed-form and approximation-free controllers for each system, ensuring T-RAS task completion while preventing inter-agent collisions in a decentralized manner. Section \ref{case-study} validates the approach through two case studies and discusses potential future directions along with the conclusion in Section \ref{conclusion}.

\section{Preliminaries and Problem Formulation}
\label{preliminaries}

\begin{table}[ht]
\centering
\caption{Summary of Notation}
\begin{tabular}{ll}
\hline
\textbf{Symbol} & \textbf{Description} \\
\hline
$\mathbb{N}, \mathbb{R}, \mathbb{R}^+, \mathbb{R}_0^+$ & 
\begin{tabular}[t]{@{}l@{}}
Sets of natural, real, positive real, and \\
nonnegative real numbers
\end{tabular} \\
$\mathbb{R}^{n \times m}$ & Set of real matrices of size $n \times m$ \\
$\| \cdot \|$ & Euclidean norm \\
$[a; b]$ & Closed interval in $\mathbb{N}$, $a \leq b$ \\
$[x_1, \ldots, x_n]^\top$ & Column vector in $\mathbb{R}^n$ \\
$\textsf{diag}(d_1,\ldots,d_n)$ & Diagonal matrix with specified entries \\
$M^\top$ & Transpose of matrix $M$ \\
$\mathcal{P}(\mathbf{A})$ & Power set of set $\mathbf{A}$ \\
$\prod_{i=1}^{N} \X_i$ & Cartesian product of sets $\X_i$ \\
$[\X_{i,L}, \X_{i,U}]$ & 
\begin{tabular}[t]{@{}l@{}}
Projection interval of set $\X$ onto $i$-th coordinate, \\
$\X_{i,L} := \min\{x_i \in \mathbb{R} \mid [x_1, \ldots, x_n] \in \X\}$, \\
$\X_{i,U} := \max\{x_i \in \mathbb{R} \mid [x_1, \ldots, x_n] \in \X\}$
\end{tabular} \\
$x \circ y$ & Elementwise multiplication of vectors $x,y \in \mathbb{R}^n$, \\
& defined by $(x \circ y)_i := x_i y_i, \forall i \in [1; n]$ \\
$I_n$ & Identity matrix of size $n \times n$ \\
\hline
\end{tabular}
\end{table}

\subsection{System Definition}
Consider multi-agent systems with $M$ agents, whose $j$-th agent is given by a class of control-affine MIMO nonlinear pure feedback systems defined as:
\begin{align} \label{eqn:sysdyn}
    &\dot{x}_i^{(j)}(t) = f_i^{(j)}(\overline{x}_i^{(j)}(t)) + g_i^{(j)}(\overline{x}_i^{(j)}(t))x_{i+1}^{(j)}(t) + w_i^{(j)}(t),\notag \\ &i\in [1;N-1], \notag\\
    &\dot{x}_{N}^{(j)}(t) = f_N^{(j)}(\overline{x}_N^{(j)}(t)) + g_N^{(j)}(\overline{x}_N^{(j)}(t))u^{(j)}(t) + w_N^{(j)}(t), \\
    &y^{(j)}(t) = x_1^{(j)}(t), \nonumber
\end{align}
where for $t\in\R^+_0$, $i\in[1;N]$, and $j \in [1;M]$,
\begin{itemize}
    \item $x_i^{(j)}(t) = [x_{i,1}^{(j)}(t), \ldots, x_{i,n}^{(j)}(t)]^\top \in {\X}_i^{(j)} \subset \mathbb{R}^{n}$ is the state vector,
    \item $\overline{x}_i^{(j)}(t) := [{x_1^{(j)}}^\top(t),\ldots,{x_i^{(j)}}^\top(t)]^\top \in \overline{\X}_i^{(j)} = \prod_{k=1}^i \X_i^{(j)} \subset \mathbb{R}^{ni} $,
    \item $u^{(j)}(t) \in \mathbb{R}^n$ is the control input vector,
    \item $w_i^{(j)}(t) \in \mathbf{W} \subset \R^n$ is an unknown bounded external disturbance, and
    \item $y^{(j)}(t) = [x_{1,1}^{(j)}(t), \ldots, x_{1,n}^{(j)}(t)]\in \Y^{(j)}=\X_1^{(j)}$ denotes the output vector.
\end{itemize}

For all $i \in [1;N]$ and $j \in [1;M]$, the functions $f_i^{(j)}: \mathbb{R}^{n} \rightarrow \mathbb{R}^n$ and $ g_i^{(j)}: \mathbb{R}^{n} \rightarrow \mathbb{R}^{n \times n}$ follows the Assumptions \ref{assum:lip} and \ref{assum:pd}.

\begin{assumption}\label{assum:lip}
    For all $i \in [1;N], j \in [1;M]$, functions $f_i^{(j)}$ and $g_i^{(j)}$ are unknown and locally Lipschitz.         
\end{assumption}
\begin{assumption}[\cite{PPC1}, \cite{STT}, \cite{RAC}]\label{assum:pd}
    For all $j \in [1;M]$, the matrix $g_{i,s}^{(j)}(\overline{x}_i^{(j)}) = \frac{g_i^{(j)}(\overline{x}_i^{(j)})+g_i^{(j)}(\overline{x}_i^{(j)})^\top}{2}$ is uniformly sign definite with known signs for all $\overline{x}_i^{(j)} \in \overline{\X}_i^{(j)}$. Without loss of generality, we assume $g_{i,s}^{(j)}(\overline{x}_i^{(j)})$ is positive definite, i.e., there exists a constant $\underline{g_i^{(j)}}\in\mathbb R^+, \forall i \in [1;N]$ such that
    $$0 < \underline{g_i^{(j)}} \leq \lambda_{\min} (g_{i,s}^{(j)}(\overline{x}_i^{(j)})), \forall \ \overline{x}_i^{(j)} \in \overline{\X}_i^{(j)},$$
    where $\lambda_{\min}(\cdot)$ represents the smallest eigenvalue of the matrix.
    This assumption establishes a global controllability condition for \eqref{eqn:sysdyn}.
\end{assumption}

\subsection{Problem Formulation}
Let the output of the $j$-th system \eqref{eqn:sysdyn}, denoted by $y^{(j)}(t)$, be subject to a \textit{temporal reach-avoid-stay specification} defined below.

\begin{definition}[Temporal reach-avoid-stay (T-RAS) task]\label{def:tras}
Given a common output space $\Y = \X_{1}^{(j)}, \forall j \in [1;M]$, a prescribed time horizon $[0, t_c]$ with $t_c \in \R^+$, a time-varying unsafe set $\U: \R_0^+ \rightarrow \mathcal{P}(\Y)$, an initial set $\So^{(j)} \subset \Y \setminus \U(0)$, and a target set $\T^{(j)} \subset \Y \setminus \U(t_c)$, we say that the output $y^{(j)}(t)$ of agent $j$ satisfies the T-RAS task if:
\begin{itemize}
    \item $y^{(j)}(0) \in \So^{(j)}$ (starts in the safe initial set),
    \item $y^{(j)}(t_c) \in \T^{(j)}$ (reaches the target set at time $t_c$),
    \item $y^{(j)}(t) \in \Y \setminus \U(t), \ \forall t \in [0, t_c]$ (remains outside the time-varying unsafe set).
\end{itemize}
\end{definition}
\begin{remark}
    The T-RAS task ensures avoidance of both static and dynamic obstacles. These are captured by the time-varying set $\U(t) \subset \Y$, which aggregates all obstacle regions.
\end{remark}
\begin{remark}
    Note that the output space $\Y$ is the same for all the agents, and the unsafe set is present only in the output space.
\end{remark}
\begin{definition}[Inter-agent Collision Avoidance (CA) task] \label{def:ca}
Given $M$ agents with outputs $y^{(j)}(t)$ for $j \in [1;M]$, we say the agents satisfy the CA task if:
$\forall t \in [0, t_c], \quad \forall j \neq k, \quad \|y^{(j)}(t)- y^{(k)}(t)\| > 0.$
\end{definition}
\begin{remark}
    The CA task specifically refers to inter-agent collision avoidance. It ensures that agents maintain safe separation from each other throughout the task horizon. The avoidance of static and dynamic obstacles in the environment is separately handled by the T-RAS task via the set $\U(t)$.
\end{remark}

\begin{problem}\label{prob}
Given $M$ agents governed by the dynamics in \eqref{eqn:sysdyn}, design an approximation-free, closed-form, and decentralized control law such that each agent's output $y^{(j)}(t)$ satisfies:
\begin{itemize}
    \item The T-RAS specification in Definition \ref{def:tras}, and
    \item The CA specification in Definition \ref{def:ca}.
\end{itemize}
\end{problem}


In order to solve this problem, we leverage the notion of spatiotemporal tubes, proposed by authors in \cite{STT}, to address the T-RAS specification for a single control affine agent and defined as follows:

\begin{definition}[Spatiotemporal tubes (STT) for T-RAS task]\label{def:stt}
   Given a T-RAS task in Definition \ref{def:tras}, for $j$-th agent, time-varying intervals $[\gamma_{i,L}^{(j)}(t), \gamma_{i,U}^{(j)}(t)]$, where $\gamma_{i,L}^{(j)}:\R_0^+\rightarrow\R$ and $\gamma_{i,U}^{(j)}:\R_0^+\rightarrow\R$ are continuously differentiable functions with $\gamma_{i,L}^{(j)}(t) < \gamma_{i,U}^{(j)}(t)$, are called spatiotemporal tubes (STT) for T-RAS task if for all $i \in [1;n]$, the following holds:
\begin{subequations}\label{eqn:stt}
\begin{align}
    &\prod_{i=1}^n [\gamma_{i,L}^{(j)}(0), \gamma_{i,U}^{(j)}(0)] \subseteq \So^{(j)},
    \prod_{i=1}^n [\gamma_{i,L}^{(j)}(t_c), \gamma_{i,U}^{(j)}(t_c)] \subseteq \T^{(j)}, \label{eqn:sttbc} \\
    &\prod_{i=1}^n [\gamma_{i,L}^{(j)}(t), \gamma_{i,U}^{(j)}(t)] \subseteq \Y, \forall t \in [0,t_c], \label{eqn:stta}\\
    &\prod_{i=1}^n 
    [\gamma_{i,L}^{(j)}(t), \gamma_{i,U}^{(j)}(t)] \cap \U(t) = \emptyset, \forall t \in [0,t_c].\label{eqn:sttd}
\end{align}
\end{subequations}
\end{definition}

So, using the STTs, we aim to solve the T-RAS specification for each agent while performing the global CA task. 

\begin{remark}
    If one designs a control law that constrains the output trajectory within the STTs defined in Definition \ref{def:stt}, i.e.,
    \begin{align} \label{eqn:stt_con}
    &\gamma_{i,L}^{(j)}(t) < y_i^{(j)}(t) < \gamma_{i,U}^{(j)}(t), \forall i \in [1;n], \forall j \in [1;M] \nonumber \\
    &\implies \gamma_L^{(j)}(t) < x_1^{(j)}(t) < \gamma_U^{(j)}(t),
    \end{align}
then one can ensure the satisfaction of T-RAS specification.
\end{remark}

The tubes in \eqref{eqn:stt} can not accomplish the global collision avoidance (CA) task for multi-agent systems \eqref{eqn:sysdyn}, as they do not inherit any constraint to prevent the collision task. Hence, the safety of the agents is at stake. So, to incorporate the collision prevention task, we introduce the following constraint in addition to \eqref{eqn:stt}. 
\begin{align} \label{eq:collision}
    &\forall j, k \in [1;M] \text{ and } j \neq k, \forall t \in [0,t_c] : \notag \\
    &\prod_{i=1}^n[\gamma_{i,L}^{(j)}(t), \gamma_{i,U}^{(j)}(t)] \bigcap \prod_{i=1}^n[\gamma_{i,L}^{(k)}(t), \gamma_{i,U}^{(k)}(t)] = \emptyset. 
\end{align}
Thus, Equation \eqref{eqn:stt} and Equation \eqref{eq:collision} can simultaneously handle the T-RAS and CA task defined in Definition \ref{def:tras} and \ref{def:ca}.

The STT approach proposed in \cite{STT} adds to the conservatism in the specifications by limiting the unsafe set to a union of convex sets and requiring the projection of the unsafe set not to overlap with the initial and target sets in at least one dimension. Moreover, the introduction of the circumvent function results in an abrupt change in the tube shape, thereby increasing the control effort. Therefore, we introduce the sampling-based approach to construct the STTs for the multi-agent systems.

\section{Construction of Spatiotemporal Tubes} \label{data_driven}
In this section, our main focus is to synthesize the STTs for multi-agent systems, \textit{i.e.,} to construct the tubes for each agent that will reach the target region, avoiding the unsafe sets as well as avoiding collision with each other. We use the sampled data of the unsafe set and time to generate the tubes.

\begin{lemma}\label{lem:CA_condition}
    Equation \eqref{eq:collision} implies that for all $ j,k \in [1;M], \hspace{0.1 cm} j \neq k$, there exists $i\in[1;n]$, such that:
    \begin{equation} \label{eq:col}
        \text{min} \left (\gamma_{i,U}^{(j)}(t) - \gamma_{i,L}^{(k)}(t), \gamma_{i,U}^{(k)}(t) - \gamma_{i,L}^{(j)}(t) \right) < 0.
    \end{equation}
\end{lemma}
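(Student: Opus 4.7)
The plan is to prove the lemma by contrapositive, reducing the statement about the disjointness of two axis-aligned $n$-dimensional boxes to the disjointness of their projections onto a single coordinate axis. The argument is purely set-theoretic, and none of the specific structure of the STTs beyond being Cartesian products of closed intervals will be used.

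First I would recall the elementary fact that for two closed real intervals $[a_1,b_1]$ and $[a_2,b_2]$ with $a_1 \leq b_1$ and $a_2 \leq b_2$, the intersection is nonempty if and only if $a_1 \leq b_2$ and $a_2 \leq b_1$, equivalently $\min(b_1 - a_2,\, b_2 - a_1) \geq 0$. Contrapositively, the two intervals are disjoint if and only if $\min(b_1 - a_2,\, b_2 - a_1) < 0$. Applying this with $a_1 = \gamma_{i,L}^{(j)}(t)$, $b_1 = \gamma_{i,U}^{(j)}(t)$, $a_2 = \gamma_{i,L}^{(k)}(t)$, and $b_2 = \gamma_{i,U}^{(k)}(t)$ yields the exact inequality \eqref{eq:col} as the criterion for the $i$-th projected intervals to be disjoint.

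Next I would argue that for the product boxes, disjointness in at least one coordinate is necessary. Suppose, for contradiction, that for every $i \in [1;n]$ the two projected intervals intersect, i.e., $\min(\gamma_{i,U}^{(j)}(t) - \gamma_{i,L}^{(k)}(t),\, \gamma_{i,U}^{(k)}(t) - \gamma_{i,L}^{(j)}(t)) \geq 0$. Then for each $i$ I can pick an explicit common point, for instance $x_i := \max(\gamma_{i,L}^{(j)}(t), \gamma_{i,L}^{(k)}(t))$, which by the overlap condition lies in both $[\gamma_{i,L}^{(j)}(t), \gamma_{i,U}^{(j)}(t)]$ and $[\gamma_{i,L}^{(k)}(t), \gamma_{i,U}^{(k)}(t)]$. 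The resulting vector $(x_1,\dots,x_n)$ then belongs to both $\prod_{i=1}^n [\gamma_{i,L}^{(j)}(t), \gamma_{i,U}^{(j)}(t)]$ and $\prod_{i=1}^n [\gamma_{i,L}^{(k)}(t), \gamma_{i,U}^{(k)}(t)]$, contradicting \eqref{eq:collision}. Hence there must exist some $i \in [1;n]$ for which the one-dimensional intervals are disjoint, and by the first step this coincides with \eqref{eq:col}.

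There is really no main obstacle here; the lemma is a straightforward geometric observation that boxes fail to intersect exactly when at least one pair of coordinate projections fails to intersect. The only subtlety worth flagging is that the agents' indices $j$ and $k$ are arbitrary (with $j\neq k$) and the time $t$ is fixed, so the existential quantifier on $i$ is allowed to depend on both the pair $(j,k)$ and $t$, which is consistent with the statement as written.
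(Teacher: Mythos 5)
Your proposal is correct, and it is essentially the standard fact that two axis-aligned boxes are disjoint iff some pair of coordinate projections is disjoint; but you prove it by contrapositive, whereas the paper argues in the forward direction. The paper fixes a single point $y(t)$ of the $j$-th box, observes from \eqref{eq:collision} that it cannot lie in the $k$-th box, deduces that some coordinate $y_i$ lies in $[\gamma_{i,L}^{(j)}(t),\gamma_{i,U}^{(j)}(t)]$ but not in $[\gamma_{i,L}^{(k)}(t),\gamma_{i,U}^{(k)}(t)]$, and then jumps to the conclusion that the two intervals themselves are disjoint. That last step is not valid for a single point (take $0\in[0,2]$ with $0\notin[1,3]$: the point escapes the second interval, yet $[0,2]\cap[1,3]\neq\emptyset$), so the paper's argument as written has a small gap that is only repaired by quantifying over all points of the box. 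Your contrapositive version does exactly that repair: assuming every pair of projected intervals overlaps, you exhibit the explicit common point $x_i=\max(\gamma_{i,L}^{(j)}(t),\gamma_{i,L}^{(k)}(t))$ in each coordinate and assemble a point of the intersection, contradicting \eqref{eq:collision}; combined with your correct one-dimensional criterion $\min(b_1-a_2,\,b_2-a_1)<0$ for disjointness of closed intervals, this yields \eqref{eq:col} rigorously. In short, your route is logically dual to the paper's but tighter, at the cost of being slightly less constructive about which coordinate $i$ witnesses the separation.
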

\begin{proof}
   Let $y(t) = [y_1(t), \ldots y_n(t)]$ be any particular point inside the interval $\prod_{i=1}^n[\gamma_{i,L}^{(j)}(t), \gamma_{i,U}^{(j)}(t)]$ at time $t \in [0,t_c]$. From Equation \eqref{eq:collision}, it can be infered that $y(t) \not \in \prod_{i=1}^n[\gamma_{i,L}^{(k)}(t), \gamma_{i,U}^{(k)}(t)], \text{for any} \hspace{0.1 cm} j \neq k \hspace{0.1 cm}\text{at time}\hspace{0.1 cm} t$.

   Clearly, there exists $i$ such that $x_{1,i}(t) \in [\gamma_{i,L}^{(j)}(t), \gamma_{i,U}^{(j)}(t)]$, but $x_{1,i}(t) \not \in [\gamma_{i,L}^{(k)}(t),\gamma_{i,U}^{(k)}(t)]$. From here, one can imply that either $\gamma_{i,U}^{(j)}(t) < \gamma_{i,L}^{(k)}(t)$ or $\gamma_{i,U}^{(k)}(t) < \gamma_{i,L}^{(j)}(t)$ that is equivalent to \eqref{eq:col}, which completes the proof.
\end{proof}

In this setting, we fix the structure of the curves that will form the STTs for $j$-th agent and for the $i$-th dimension as, 
$$\gamma_{i,\con}^{(j)}(c_{i,\con}^{(j)},t) = \sum_{k=1}^{z_{i,\con}} c_{i,\con}^{j,k} p_{i,\con}^k(t), \ \con \in \{L,U\}, \ i\in [1;n], \ j \in [1;N].$$
where $L$ and $U$ denote the lower and upper constraints, respectively. $p_{i,\con}(t)$ are user-defined locally Lipschitz continuous basis functions and $c_{i,\con}^{(j)} = [c_{i,\con}^{j,1}; c_{i,\con}^{j,2};\ldots; c_{i,\con}^{j,z_{i,\con}}] \in \mathbb{R}^{z_{i,\con}}$ denote the unknown coefficients.

To satisfy the conditions in Definition \ref{def:stt} and \eqref{eq:collision}, we recast our problem as the following robust optimization problem (ROP):
\begin{subequations} \label{eq:ROP}
\begin{align}
& \min_{[e_1, e_2,\ldots,e_N,\eta]} \quad \eta  \notag \\
& \textrm{s.t.} \notag \\
& \forall i \in[1;n], \forall j \in[1;M], \notag \\
& \quad \gamma_{i,\con}^{(j)}(c_{i,\con}^{(j)},0)\hspace{-0.2em} =\hspace{-0.2em} \hat{\So}_{i,\con}^{(j)}, \gamma_{i,\con}^{(j)}(c_{i,\con}^{(j)},t_c) = \hat{\T}_{i,\con}^{(j)}; \label{eq:start_target_ROP}
\end{align}
\begin{align}
& \forall (t,i) \in [0,t_c]\times[1;n], \forall j \in [1;M]: \notag \\
& \quad \Y_{i,L} - \gamma_{i,\con}^{(j)}(c_{i,\con}^{(j)},t)  \leq \eta_i^{(j)}, 
\gamma_{i,\con}^{(j)}(c_{i,\con}^{(j)},t) - \Y_{i,U}  \leq \eta_i^{(j)}, \label{eq:state_boundary_ROP} \\
& \quad \gamma_{i,L}^{(j)}(c_{i,L}^{(j)},t) - \gamma_{i,U}^{(j)}(c_{i,U}^{(j)},t) + \gamma_{i,d}^{(j)} \leq \eta_i^{(j)},\label{eq:curve_boundary_ROP} \\
& \forall (t, \y) \in [0,t_c] \times \U(t), \forall j \in [1;M] \hspace{0.1 cm}, \exists \hspace{0.1 cm} i \in [1;n]: \notag \\
& \quad \text{min} \left (\y_i - \gamma_{i,L}^{(j)}(c_{i,L}^{(j)},t), \gamma_{i,U}^{(j)}(c_{i,U}^{(j)},t) - \y_i   \right) \leq \eta_i^{(j)}; \label{eq:unsafe_ROP}\\
& \forall t \in [0,t_c], \forall j,k \in [1;M] \hspace{0.1 cm} \text{and}\hspace{0.1 cm} j \neq k, \hspace{0.1 cm} \exists \hspace{0.1 cm} i \in [1;n]: \notag \\
& \quad \text{min} \big\{\gamma_{i,U}^{(j)}(c_{i,U}^{(j)},t) - \gamma_{i,L}^{(k)}(c_{i,L}^{(k)},t), \notag \\
& \quad \quad \gamma_{i,U}^{(k)}(c_{i,U}^{(k)},t) - \gamma_{i,L}^{(j)}(c_{i,L}^{(j)},t) \big\}\leq \eta_i^{(j)}; \label{eq:CA_ROP}\\
& \forall i \in[1;n], \forall j \in[1;M]: \eta_i^{(j)} < \eta, \label{eq:eta_ROP}\\
& d_i^{(j)} = [c_{i,L}^{(j)}, c_{i,U}^{(j)}, \eta_i^{(j)}], \
e_j = [d_1^{(j)}, \ldots, d_n^{(j)}] \notag .
\end{align}
\end{subequations}

Here, $\con \in \{L,U\}$. One can readily observe that if the solution to the ROP $\eta^* \leq 0$, then it ensures the conditions in Definition \ref{def:tras} and \ref{def:ca} will be satisfied along with \eqref{eq:collision}.

\begin{remark}\label{rem:cons_explain}
    {If the solution to ROP $\eta^* \leq 0$, then the conditions~\eqref{eq:ROP} directly correspond to the STT characteristic in Equation~\eqref{eqn:stt} and \eqref{eq:collision}.}
    \begin{enumerate}
        \item {The condition \eqref{eq:start_target_ROP} guarantees \eqref{eqn:sttbc}, i.e., the tubes of $j$-th agent start within the given initial set $\So^{(j)}$:
        $$\prod_{i=1}^n[\gamma_{i,L}^{(j)}(c_{i,L}^{(j)},0), \gamma_{i,U}^{(j)}(c_{i,U}^{(j)},0)] = \prod_{i=1}^n [\hat{\So}_{i,L}^{(j)}, \hat{\So}_{i,U}^{(j)}] \subseteq \So^{(j)},$$
        and reach their target set $\T^{(j)}$ at time $t = t_c$,
        $$\prod_{i=1}^n[\gamma_{i,L}^{(j)}(c_{i,L}^{(j)},t_c), \gamma_{i,U}^{(j)}(c_{i,U}^{(j)},t_c)] = \prod_{i=1}^n [\hat{\T}_{i,L}^{(j)}, \hat{\T}_{i,U}^{(j)}] \subseteq \T^{(j)}.$$
        \item Condition \eqref{eq:state_boundary_ROP} enforces that the tubes are always confined within the output-space $\Y$, thereby satisfying \eqref{eqn:stta}.
        \item Condition \eqref{eq:curve_boundary_ROP} with $\gamma_{i,d}^{(j)} \in \R^+$ ensures a non-zero separation between the STT boundaries, i.e., 
        $\gamma_{i,L}^{(j)}(t) < \gamma_{i,U}^{(j)}(t).$
        This provides a finite-width tube to contain the system trajectories.
        \item In condition \eqref{eq:unsafe_ROP}, requiring the minimum of the two terms to be less than a negative variable $\eta_i^{(j)}$ guarantees that the STT boundaries never intersect the unsafe set: the lower boundary remains above unsafe points or the upper boundary remains below them for all time. Consequently, the entire tube envelope stays outside the unsafe region, satisfying condition \eqref{eqn:sttd}.
        \item Finally, condition \eqref{eq:CA_ROP} ensures Lemma \ref{lem:CA_condition}, thereby satisfying the collision-avoidance requirement in \eqref{eq:collision}.}
    \end{enumerate}
\end{remark}

\begin{remark}\label{rem:eta_explain}
    {Here, $\eta_i^{(j)}$ denotes the local slack variable for the $i$-th dimension of agent $j$, and $\eta$ represents their global upper bound. If the ROP is solved with $\eta \leq 0$, then equation~\eqref{eq:eta_ROP} guarantees that all local slacks satisfy $\eta_i^{(j)} < 0$. As a result, every constraint is enforced, ensuring the construction of STTs that satisfy the T-RAS tasks for each agent while also fulfilling the CA requirement.}  
\end{remark}

\begin{remark}\label{rem:feas_ROP}
    {The proposed solution is sound: whenever a feasible solution exists, the resulting tubes guarantee that each agent satisfies the T-RAS tasks while maintaining CA. The feasibility of the optimization problem \eqref{eq:ROP}, however, depends on the degree of the polynomial basis functions used to parametrize the STTs. If no feasible solution is obtained, a higher-degree polynomial may be required to represent the tubes.}
\end{remark}

There are two major challenges in solving the proposed ROP in \eqref{eq:ROP}. Since the time is continuous, there will be infinitely many constraints. Another problem is that though the unsafe set is known a priori, constructing the tubes for each agent and avoiding all the points in the unsafe set will lead to solving an infinite number of constraints as well. 

To address this problem, we first define the augmented unsafe set as $\W = (t, \U) \in [0,t_c] \times \Y$. We then draw $N_t$ samples {$\vo_r = (t_r, \y_r)$ from $\W$, where $r \in [1;N_t]$, and construct open balls $\W_r $ centered at each sample $\vo_r$ with radius $\epsilon$. These balls are chosen such that for every point $(t,\y) \in \W$, there exists a sample $\vo_r$ satisfying
\begin{align}\label{eq:ball}
    \lVert (t, \y) - (t_r, \y_r) \rVert \leq \epsilon , \forall (t,\y) \in \W_r,
\end{align}
ensuring that the entire augmented unsafe set is covered by the union of these balls, i.e., 
$[0,t_c] \times \U(t) \subseteq \bigcup_{r=1}^{N_t} \W_r.$}

Now, we construct the scenario optimization problem (SOP) associated with the ROP in \eqref{eq:ROP} as follows:
\begin{subequations} \label{eq:SOP}
\begin{align}
& \min_{[e_1, e_2,\ldots,e_N,\eta]} \quad \eta  \notag \\
& \textrm{s.t.} \notag \\
& \forall i \in [1;n], \forall j \in[1;M]: \notag \\
& \quad \gamma_{i,\con}^{(j)}(c_{i,\con}^{(j)},0)\hspace{-0.2em} =\hspace{-0.2em} \hat{\So}_{i,\con}^{(j)}, \gamma_{i,\con}^{(j)}(c_{i,\con}^{(j)},t_c) = \hat{\T}_{i,\con}^{(j)}; \\
& \forall r \in [1;N_t], \forall (t_r,i) \in [0,t_c]\times[1;n], \forall j \in [1;M]: \notag \\
& \quad \Y_{i,L} - \gamma_{i,\con}^{(j)}(c_{i,\con}^{(j)},t_r)  \leq \eta_i^{(j)}, 
\gamma_{i,\con}^{(j)}(c_{i,\con}^{(j)},t_r) - \Y_{i,U}  \leq \eta_i^{(j)}, \\
& \quad \gamma_{i,L}^j(c_{i,L}^{(j)},t_r) - \gamma_{i,U}^{(j)}(c_{i,U}^{(j)},t_r) + \gamma_{i,d}^{(j)} \leq \eta_i^{(j)}, \\
& \forall r \in [1;N_t], \forall (t_r, \y_r) \in \W_r, \forall j \in [1;M] \hspace{0.1 cm}, \exists \hspace{0.1 cm} i \in [1;n]: \notag \\
& \quad \text{min} \left (x_{i,r} - \gamma_{i,L}^{(j)}(c_{i,L}^{(j)},t_r), \gamma_{i,U}^{(j)}(c_{i,U}^{(j)},t_r) - x_{i,r}   \right) \leq \eta_i^{(j)}; \\
& \forall r \in [1;N_t], \forall t_r \in [0,t_c], \forall j,k \in [1;M] \hspace{0.1 cm} \& \hspace{0.1 cm} j \neq k, \hspace{0.1 cm} \exists \hspace{0.1 cm} i \in [1;n]: \notag \\
& \quad \text{min} \big\{\gamma_{i,U}^{(j)}(c_{i,U}^{(j)},t_r) - \gamma_{i,L}^{(k)}(c_{i,L}^{(k)},t_r), \notag \\
& \quad \quad \gamma_{i,L}^{(k)}(c_{i,U}^{(k)},t_r) - \gamma_{i,L}^{(j)}(c_{i,L}^{(j)},t_r) \big\}\leq \eta_i^{(j)}; \\
& \forall i \in [1;n], \forall j \in[1;M]: \eta_i^{(j)} < \eta, \\
& d_i^{(j)} = [c_{i,L}^{(j)}, c_{i,U}^{(j)}, \eta_i^{(j)}], \notag \\
& e_j = [d_1^{(j)}, \ldots, d_n^j] \notag .
\end{align}
\end{subequations}

Here, $\y_r = [\y_{1,r}, \ldots, \y_{n,r}]$. {The SOP in \eqref{eq:SOP} therefore contains a finite number of constraints, each having the same structural form as those in the ROP in \eqref{eq:ROP}. In this way, the SOP serves as a finite approximation of the ROP, while preserving the feasibility conditions \eqref{eqn:stt} and \eqref{eq:collision} as discussed in Remark~\ref{rem:cons_explain}.}

Let the optimal solution of the SOP be $\eta_S^*$. Now, to guarantee that the tubes formed by solving the SOP \eqref{eq:SOP}, fulfills the constraints of ROP \eqref{eq:ROP}, we assume the following: 

\begin{assumption} \label{assum:funlip}
    $ \gamma_{i,L}^{(j)}(c_{i,L}^{(j)},t)$ and $ \gamma_{i,U}^{(j)}(c_{i,U}^{(j)},t)$ are Lipschitz continuous with respect to $t$ with Lipschitz constants $\mathcal{L}_{L}$ and $\mathcal{L}_{U}$ for any $i \in [1;n], j \in [1;M]$. Note that imposing these Lipschitz constraints prevents the sharp changes in the tube, resulting in less control effort.
\end{assumption}

Under Assumption \ref{assum:funlip}, Theorem \ref{th:constr} outlines the methodology for constructing STTs with a certified confidence of 1.

\begin{theorem} \label{th:constr}
    Under Assumption \ref{assum:funlip}, suppose the SOP in (\ref{eq:SOP}) is solved with $N_t$ sampled data as in (\ref{eq:ball}). Let the optimal value be $\eta_S^*$. If 
    \begin{equation} \label{eq:satisfy}
        \eta_S^* + \mathcal{L}\epsilon \leq 0,
    \end{equation}
    where $\mathcal{L}$ = max\{$\mathcal{L}_{L}, \mathcal{L}_{U}, \mathcal{L}_{L} + \mathcal{L}_{U}, \mathcal{L}_{L}+1, \mathcal{L}_{U}+1$\}, then the obtained STTs $[ \gamma_{i,L}^{(j)}(c_{i,L}^{(j)},t),  \gamma_{i,U}^{(j)}(c_{i,U}^{(j)},t)], \forall i \in [1;n], \forall j \in [1;M]$ from the SOP in \eqref{eq:SOP} ensures that the conditions corresponding to the ROP \eqref{eq:ROP} will be satisfied, and therefore by virtue of Remark \ref{rem:cons_explain} and \ref{rem:eta_explain}, the conditions in Definition \ref{def:stt} and Equation \eqref{eq:collision} are satisfied.
\end{theorem}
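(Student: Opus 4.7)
The plan is to show, case by case, that every constraint of the ROP~\eqref{eq:ROP} is satisfied by the tubes produced by the SOP~\eqref{eq:SOP}, using the sampling covering property~\eqref{eq:ball} together with the Lipschitz continuity from Assumption~\ref{assum:funlip}. The overall template is standard: for an arbitrary query point $(t,\y)$ in the continuous domain, pick a sample $(t_r,\y_r)$ within distance $\epsilon$; by Lipschitzness, each constraint expression differs from its value at the sample by at most $\mathcal{L}\epsilon$; since the SOP value at the sample is bounded by $\eta_S^*$, the ROP value at the query is bounded by $\eta_S^*+\mathcal{L}\epsilon$, which is $\leq 0$ by hypothesis~\eqref{eq:satisfy}.

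First I would dispatch the easy cases. The boundary conditions~\eqref{eq:start_target_ROP} appear verbatim in the SOP, so they hold without any approximation argument. For the state-boundary constraints~\eqref{eq:state_boundary_ROP}, pick any $t\in[0,t_c]$ and a sample $t_r$ with $|t-t_r|\leq\epsilon$; then $\Y_{i,L}-\gamma_{i,\con}^{(j)}(c_{i,\con}^{(j)},t) \leq \Y_{i,L}-\gamma_{i,\con}^{(j)}(c_{i,\con}^{(j)},t_r) + \mathcal{L}_\con\,|t-t_r| \leq \eta_i^{(j)} + \mathcal{L}_\con\epsilon \leq \eta_S^*+\mathcal{L}\epsilon\leq 0$, and symmetrically for the upper bound. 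The separation condition~\eqref{eq:curve_boundary_ROP} uses both $\gamma_L$ and $\gamma_U$ simultaneously, which is why $\mathcal{L}$ is defined to dominate $\mathcal{L}_L+\mathcal{L}_U$.

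The crux of the proof is the unsafe-set condition~\eqref{eq:unsafe_ROP} and the collision-avoidance condition~\eqref{eq:CA_ROP}, both of which involve a $\min$ and an existential quantifier over the dimension index $i$. For~\eqref{eq:unsafe_ROP}, given $(t,\y)\in[0,t_c]\times\U(t)$, the covering property~\eqref{eq:ball} yields a sample $(t_r,\y_r)\in\W_r$ with $\|(t,\y)-(t_r,\y_r)\|\leq\epsilon$, hence $|t-t_r|\leq\epsilon$ and $|\y_i-\y_{i,r}|\leq\epsilon$ coordinatewise. At the sample, by the SOP there is an $i\in[1;n]$ for which the $\min$ is $\leq\eta_i^{(j)}$; I will take the same $i$ as the existential witness at the query point and bound
\[
\y_i-\gamma_{i,L}^{(j)}(c_{i,L}^{(j)},t) \leq \bigl(\y_{i,r}-\gamma_{i,L}^{(j)}(c_{i,L}^{(j)},t_r)\bigr) + |\y_i-\y_{i,r}| + \mathcal{L}_L|t-t_r|,
\]
and similarly for the upper-boundary term. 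Since the $\min$ is at most either argument, this places the ROP $\min$ at most $\eta_S^* + (\mathcal{L}_\con+1)\epsilon \leq \eta_S^*+\mathcal{L}\epsilon\leq 0$, which is why $\mathcal{L}_\con+1$ is included in the definition of $\mathcal{L}$. The collision-avoidance case~\eqref{eq:CA_ROP} is handled analogously but without the spatial variable $\y$: only $t$ is perturbed, and both expressions in the $\min$ are differences of one $\gamma_L$ and one $\gamma_U$, giving a Lipschitz constant $\mathcal{L}_L+\mathcal{L}_U$, again absorbed in $\mathcal{L}$.

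The main obstacle will be carefully handling the existential quantifier on $i$ in the min-type constraints: one must show that the same $i$ that witnesses the SOP constraint at the nearest sample also works for the ROP at the query point, which is where the inequality $\min(a_1,a_2)\leq \min(b_1,b_2) + \max(|a_1-b_1|,|a_2-b_2|)$ is invoked implicitly. Once these cases are assembled, together with the trivial implication that $\eta_i^{(j)}\leq\eta_S^*$ by~\eqref{eq:eta_ROP}, all ROP constraints are verified with a nonpositive residual, so by Remarks~\ref{rem:cons_explain} and~\ref{rem:eta_explain} the STT properties in Definition~\ref{def:stt} and the collision-avoidance condition~\eqref{eq:collision} hold, completing the proof.
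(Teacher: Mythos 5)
Your proposal is correct and follows essentially the same route as the paper's proof: cover the continuous domain with $\epsilon$-balls around the samples, telescope each ROP constraint through the nearest sample using the Lipschitz constants, and absorb the worst-case perturbation ($\mathcal{L}_L$, $\mathcal{L}_U$, $\mathcal{L}_L+\mathcal{L}_U$, or $\mathcal{L}_\con+1$ depending on the constraint) into $\mathcal{L}\epsilon$. Your explicit treatment of the existential quantifier over $i$ in the min-type constraints (reusing the sample's witness dimension) is in fact slightly more careful than the paper's, which handles it implicitly via the two ``Or'' cases.
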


\begin{proof}\label{proof:constr}
    First, we demonstrate that under condition \eqref{eq:satisfy}, the $ \gamma_{i,L}^{(j)}(c_{i,L}^{(j)},t)$ and $ \gamma_{i,U}^{(j)}(c_{i,U}^{(j)},t)$ constructed through solving the SOP in \eqref{eq:SOP} satisfy Equation \eqref{eqn:stt}(a) as well as Equation \eqref{eq:collision}. The optimal $\eta_S^*$, obtained through solving the \eqref{eq:SOP}, guarantees for any $t_r \in [0,t_c],$ we have: 
    \begin{align*}
    &\Y_{i,L} - \gamma_{i,L}^{(j)}(c_{i,L}^{(j)},t_r) \leq \eta_S^*, \\
    & \gamma_{i,U}^{(j)}(c_{i,U}^{(j)},t_r) - \Y_{i,U} \leq \eta_S^*,  \\
    &\gamma_{i,L}^{(j)}(c_{i,L}^{(j)}, t_r) -  \gamma_{i,U}^{(j)}(c_{i,U}^{(j)},t_r) +\gamma_{i,d}^{(j)} \leq \eta_S^*, \\
    &\text{min} \big( \gamma_{i,U}^{(j)}(c_{i,U}^{(j)},t_r) - \gamma_{i,L}^{(k)}(c_{i,L}^{(k)},t_r), \gamma_{i,U}^{(k)}(c_{i,U}^{(k)},t_r) - \\ 
    &\quad \gamma_{i,L}^{(j)}(c_{i,L}^{(j)},t_r) \big) \leq  \eta_S^* . 
    \end{align*}
    Now from \eqref{eq:ball} we infer that $\forall t \in [0,t_c], \exists \hspace{0.2em} t_r $ s.t. $|t-t_r| \leq \epsilon$. 
    Thus, $\forall i \in [1;n], \forall r \in [1;N_t], \forall j \in [1;M]$:
    \begin{itemize}
       \item[(a)] $\Y_{i,L} - \gamma_{i,L}^{(j)}(c_{i,L}^{(j)},t) =\Y_{i,L} - \gamma_{i,L}^{(j)}(c_{i,L}^{(j)},t_r) + \gamma_{i,L}^{(j)}(c_{i,L}^{(j)},t_r) - \gamma_{i,L}^{(j)}(c_{i,L}^{(j)},t) \leq \eta_S^* + \mathcal{L}_{L} |t-t_r| \leq \mathcal{L}\epsilon + \eta_S^* \leq 0$, \\
       \item[(b)] $ \gamma_{i,U}^{(j)}(c_{i,U}^{(j)},t) - \Y_{i,U} = \gamma_{i,U}^{(j)}(c_{i,U}^{(j)},t) -  \gamma_{i,U}^{(j)}(c_{i,U}^{(j)},t_r) + \gamma_{i,U}^{(j)}(c_{i,U}^{(j)},t_r) - \Y_{i,U} \leq \mathcal{L}_{U} |t-t_r| + \eta_S^* \leq \mathcal{L}\epsilon + \eta_S^* \leq 0$, \\
       \item[(c)] $ \gamma_{i,L}^{(j)}(c_{i,L}^{(j)},t) -  \gamma_{i,U}^{(j)}(c_{i,U}^{(j)},t) + \gamma_{i,d}^{(j)} = \big( \gamma_{i,L}^{(j)}(c_{i,L}^{(j)},t) - \gamma_{i,L}^{(j)}(c_{i,L}^{(j)},t_r)\big) + \big(\gamma_{i,L}^{(j)}(c_{i,L}^{(j)},t_r) - \gamma_{i,U}^{(j)}(c_{i,U}^{(j)},t_r) + \gamma_{i,d}^{(j)}\big) + \big( \gamma_{i,U}^{(j)}(c_{i,U}^{(j)},t_r) -  \gamma_{i,U}^{(j)}(c_{i,U}^{(j)},t)\big) \leq \mathcal{L}_{L} |t-t_r| +\eta_S^* + \mathcal{L}_{U} |t-t_r| \leq (\mathcal{L}_{L} + \mathcal{L}_{U})\epsilon + \eta_S^* \leq \mathcal{L}\epsilon + \eta_S^* \leq 0 $,
       \item[(d)] $ \gamma_{i,U}^{(j)}(c_{i,U}^{(j)},t) - \gamma_{i,L}^{(k)}(c_{i,L}^{(k)},t) = \big( \gamma_{i,U}^{(j)}(c_{i,U}^{(j)},t) -  \gamma_{i,U}^{(j)}(c_{i,U}^{(j)},t_r)\big) + \big( \gamma_{i,U}^{(j)}(c_{i,U}^{(j)},t_r) - \gamma_{i,L}^{(k)}(c_{i,L}^{(k)},t_r)\big) + \big(\gamma_{i,L}^{(k)}(c_{i,L}^{(k)},t_r) - \gamma_{i,L}^{(k)}(c_{i,L}^{(k)},t)\big) \leq \mathcal{L}_{U} |t-t_r| +\eta_S^* + \mathcal{L}_{L} |t-t_r| \leq (\mathcal{L}_{L} + \mathcal{L}_{U})\epsilon + \eta_S^* \leq \mathcal{L}\epsilon + \eta_S^* \leq 0$, \\
       \hspace{-2em}\text{Or} \\
       $\gamma_{i,U}^{(k)}(c_{i,U}^{(k)},t) - \gamma_{i,L}^{(j)}(c_{i,L}^{(j)},t) = \big (\gamma_{i,U}^{(k)}(c_{i,U}^{(k)},t) - \gamma_{i,U}^{(k)}(c_{i,U}^{(k)},t_r)\big) + \big (\gamma_{i,U}^{(k)}(c_{i,U}^{(k)},t_r) - \gamma_{i,L}^{(j)}(c_{i,L}^{(j)},t_r) \big) + \big (\gamma_{i,L}^{(j)}(c_{i,L}^{(j)},t_r) - \gamma_{i,L}^{(j)}(c_{i,L}^{(j)},t)\big) \leq \mathcal{L}_{U} |t-t_r| +\eta_S^* + \mathcal{L}_{L} |t-t_r| \leq (\mathcal{L}_{L} + \mathcal{L}_{U})\epsilon + \eta_S^* \leq \mathcal{L}\epsilon + \eta_S^* \leq 0$.
    \end{itemize} 
    
    Next, we show that when the condition in \eqref{eq:satisfy} is satisfied, the $ \gamma_{i,L}^{(j)}(c_{i,L}^{(j)},t)$ and $ \gamma_{i,U}^{(j)}(c_{i,U}^{(j)},t)$ formulated by solving the SOP in \eqref{eq:SOP} satisfy argument \eqref{eqn:stt}(d).
    The optimal $\eta_i^*$, obtained from solving the SOP in \eqref{eq:SOP}, also ensures that for any $(t_r, \y_r) \in \W$, 
    min $\left((\y_{i,r} - \gamma_{i,L}^{(j)}(c_{i,L}^{(j)},t_r), ( \gamma_{i,U}^{(j)}(c_{i,U}^{(j)},t_r) - \y_{i,r}\right) \leq \eta_i^*).$  
     Now, $\forall i \in [1;n],\forall j \in [1;M], \forall r \in [1;N_t]$:
     \begin{align*}
         & \quad \y_i - \gamma_{i,L}^{(j)}(c_{i,L}^{(j)},t) \\ 
         & = \left(\y_i - \y_{i,r}\right) + \left(\y_{i,r} - \gamma_{i,L}^{(j)}(c_{i,L}^{(j)},t_r)\right) + \big(\gamma_{i,L}^{(j)}(c_{i,L}^{(j)},t_r) \\
         &  - \gamma_{i,L}^{(j)}(c_{i,L}^{(j)},t)\big) \leq \epsilon + \eta_i^* +  \mathcal{L}_{L}|t-t_r| \\
         & \leq (\mathcal{L}_{L} + 1)\epsilon + \eta_S^* \leq \mathcal{L} \epsilon + \eta_S^* \leq 0 \\
         & \text{Or}\\
         & \quad  \gamma_{i,U}^{(j)}(c_{i,U}^{(j)},t) - \y_i \\
         & =  \left(\y_{i,r} - \y_i\right) + \left( \gamma_{i,U}^{(j)}(c_{i,U}^{(j)},t_r) - \y_{i,r}\right) + \big( \gamma_{i,U}^{(j)}(c_{i,U}^{(j)},t)  
         \end{align*}
         \begin{align*}
         &  -  \gamma_{i,U}^{(j)}(c_{i,U}^{(j)},t_r)\big) \leq \epsilon + \eta_i^* +  \mathcal{L}_{U}|t-t_r| \\
         & \leq (\mathcal{L}_{U} + 1)\epsilon + \eta_S^* \leq \mathcal{L} \epsilon + \eta_S^* \leq 0.
     \end{align*}

     Therefore, if the condition in \eqref{eq:satisfy} is met, the STT constructed with boundaries defined by $ \gamma_{i,L}^{(j)}(c_{i,L}^{(j)},t)$ and $ \gamma_{i,U}^{(j)}(c_{i,U}^{(j)},t), \forall i \in [1;n], \forall j \in [1;M]$ as determined by solving the SOP in \eqref{eq:SOP} satisfies Definition \ref{def:stt} and Equation \eqref{eq:collision}, thereby completing the proof.
\end{proof}

\begin{remark}
    Note that the Lipschitz constants $\mathcal{L}_L$ and $\mathcal{L}_U$ are required to check condition \eqref{eq:satisfy} in Theorem \ref{th:constr}. We introduce Algorithm \ref{algo:Lipschitz} to estimate these Lipschitz constants in the Appendix, which follows the similar procedure as \cite[Algorithm 1]{MCBC} and \cite[Algorithm 2]{FV_DD}.
\end{remark}

\section{Controller Design}
In this section, we use the STTs obtained from \eqref{eq:SOP} for each dimension of each agent to design an approximation-free, closed-form control law to solve Problem \ref{prob}. The lower triangular structure of \eqref{eqn:sysdyn} allows us to use a backstepping-like design approach similar to that described in \cite{feedback}. We show the design of the controller for the $j$-th agent where $j \in [1;M]$.

To simplify notation, we will omit the superscript $j$ and treat $\gamma_{i,L}^{(j)}(c_{i,L}^{(j)},t)$ as $\gamma_{i,L}(c_{i,L},t)$ and $\gamma_{i,U}^{(j)}(c_{i,U}^{(j)},t)$ as $\gamma_{i,U}(c_{i,U},t)$, for all $i = [1;n]$. Since the analysis in this section applies uniformly to all agents, this change will not impact readability.

 First, we design an intermediate control input $r_2$ for the $x_1$ dynamics to ensure the fulfillment of \eqref{eqn:stt_con}. We then iteratively design the intermediate control laws $r_{k+1}$ for the $x_k$ dynamics, ensuring $x_k$ tracks $r_k$, for all $k \in [2;N]$. It’s important to note that $r_{N+1}$ effectively becomes the actual control input $u$ for the system. The steps of the controller design are outlined below.

\textbf{Stage $1$}: {Given $\gamma_{1,i,L}(c_{i,L} ,t)$ and $\gamma_{1,i,U}(c_{i,U} ,t)$, $i\in[1;n]$,}
define the normalized error $e_1(x_1,t)$, the transformed error $\varepsilon_1(x_1,t)$ and the diagonal matrix $\xi_1(x_1,t)$ as
\begin{subequations} \label{eq:stage I}
   \begin{align}
    e_1(x_1,t) &= [e_{1,1}(x_{1,1},t), \ldots, e_{1,n}(x_{1,n},t)]^\top \notag \\
    &= (\gamma_{1,d} (t))^{-1} \left( 2x_1 - \gamma_{1,s} (t) \right), \\
    \varepsilon_1(x_1,t) &= \big[\ln\left(\frac{1+e_{1,1}(x_{1,1},t)}{1-e_{1,1}(x_{1,1},t)}\right), \ldots, \notag \\
    & \hspace{2em}\ln\left(\frac{1+e_{1,n}(x_{1,n},t)}{1-e_{1,n}(x_{1,n},t)}\right) \big]^\top ,\\
    \xi_1(x_1,t) &= 4 \gamma^{-1}_{1,d} (t) \Big(I_n- \text{diag} \big(e_1 \circ e_1 \big)\Big)^{-1},
    \end{align} 
\end{subequations}
where $\gamma_{1,s} := [\gamma_{1,1,U} + \gamma_{1,1,L}, \ldots, \gamma_{1,n,U} + \gamma_{1,n,L}]^\top$ and $\gamma_{1,d} := \textsf{diag}(\gamma_{1,1,d},\ldots,\gamma_{1,n,d})$, with $\gamma_{1,i,d} = \gamma_{1,i,U} - \gamma_{1,i,L}$.

The intermediate control input $r_2(x_1,t)$ is given by: 
\begin{equation*}
    r_2(x_1,t) = - \kappa_1\varepsilon_1(x_1,t)\xi_1(x_1,t), \kappa_1 \in \R^+.
\end{equation*}

\textbf{Stage $k$} ($k \in [2;N]$): {Given the reference vector $r_k(z_k,t)$, we aim to design the subsequent intermediate control $r_{k+1}(z_{k+1},t)$ for the dynamics of $x_{k+1}$, ensuring that $x_{k+1}$ tracks the trajectory determined by $r_k(z_k,t)$. Here, $z_k = [x_1,x_2,\ldots, x_{k-1}]^\top$.}

This is done by enforcing exponentially narrowing constraint $\gamma_{k,i}(t) = (p_{k,i} - q_{k,i})e^{-\mu_{k,i}t} + q_{k,i}$ such that 
\begin{align*}
    -\gamma_{k,i}(t) \leq (x_{k,i}-r_{k,i}) \leq \gamma_{k,i}(t) \ \ \forall i \in [1;n].
\end{align*}
Note that, $|x_{k,i}(t=0) - r_{k,i}(t=0)| \leq p_{k,i}$.

Now define the normalized error $e_k(x_{k},t)$, the transformed error $\varepsilon_k(x_{k},t)$ and the diagonal matrix $\xi_k(x_{k},t)$ as
\begin{subequations} \label{eq:stage k}
    \begin{align}
    e_k(x_{k},t) &= [e_{k,1}(x_{k,1},t), \ldots, e_{k,n}(x_{k,n},t)]^\top \notag \\
    &= (\gamma_{k,d} (t))^{-1} \left(x_{k} - r_k \right), \\
    \varepsilon_k(x_{k},t) &= \big[\ln\left(\frac{1+e_{k,1}(x_{k,1},t)}{1-e_{k,1}(x_{k,1},t)}\right), \ldots, \notag \\
    &\hspace{2em}\ln\left(\frac{1+e_{k,n}(x_{k,n},t)}{1-e_{k,n}(x_{k,n},t)}\right) \big]^\top, \\
    \xi_k(x_k,t) &= 4 \gamma^{-1}_{k,d} (t) \Big(I_n- \text{diag} \big(e_k \circ e_k \big)\Big)^{-1},
\end{align}
\end{subequations}
where $\gamma_{k,d} := \textsf{diag}(\gamma_{k,1,d},\ldots,\gamma_{k,n,d})$, with $\gamma_{k,i,d} = \frac{1}{2} (\gamma_{k,i,U} - \gamma_{k,i,L}) = \gamma_{k,i} \nonumber, \forall i \in [1;n]$.

So, the intermediate control inputs $r_{k+1}(z_{k+1},t)$ to enforce the desired temporal reach-avoid-stay task is given by 
\begin{equation*}
    r_{k+1}(z_{k+1},t) = - \kappa_k\varepsilon_k(x_{k},t)\xi_k(x_{k},t), \kappa_k \in \R^+.
\end{equation*}
Note that, at the $N$-th stage, $r_{N+1}(z_{N+1},t)$ essentially serves as the actual control input $u$, which is given by,
\begin{equation*}
    u(z_{N+1},t) = - \kappa_N\varepsilon_N(x_{N},t)\xi_N(x_{N},t), \kappa_N \in \R^+.
\end{equation*}





Thus, we can design the control input to ensure the trajectory for the system described in \eqref{eqn:sysdyn} to be bounded within the tubes derived using the SOP as discussed in Section \ref{data_driven}.

\begin{theorem} \label{theorem_inside}
    Given an agent described as nonlinear MIMO system in \eqref{eqn:sysdyn} satisfying assumptions \ref{assum:lip} and \ref{assum:pd} and spatio-temporal tubes as discussed in Section \ref{data_driven}, if the initial state of an agent $x_{k}(0)$ is within its corresponding STTs at time $t=0$, i.e., $\gamma_{k,i,L}(0) < x_{k,i}(0) < \gamma_{k,i,U}(0), \forall i \in [1;n], \forall k \in [1;N]$, then the closed-form control strategies,\begin{subequations}\label{eqn:Control_ras}
     \begin{align}
        r_{k+1}(z_{k+1},t) &= - \kappa_k\varepsilon_k(x_{k},t)\xi_k(x_{k},t), k \in [1;N-1], \\
        u(z_{N+1},t) &= - \kappa_N\varepsilon_N(x_{N},t)\xi_N(x_{N},t),
    \end{align}    
    \end{subequations}
    will ensure the trajectory of the agent bounded inside the corresponding tubes, where $z_{k+1} = [x_1,x_2,\ldots, x_k]^\top$ and $\varepsilon_k$, $\xi_k$ are as shown in \eqref{eq:stage I} and \eqref{eq:stage k}.
\end{theorem}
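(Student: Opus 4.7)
The plan is to use a standard funnel/prescribed-performance argument, proceeding by induction on the stages $k = 1, \ldots, N$. Since the hypothesis gives $\gamma_{k,i,L}(0) < x_{k,i}(0) < \gamma_{k,i,U}(0)$ for every $i$ and $k$, the normalized errors $e_k(0)$ defined in \eqref{eq:stage I}--\eqref{eq:stage k} lie componentwise in the open interval $(-1,1)$. By continuity of the solutions of \eqref{eqn:sysdyn} under Assumption \ref{assum:lip}, there exists a maximal time $t_{\max} \in (0, t_c]$ such that $e_{k,i}(t) \in (-1,1)$ on $[0, t_{\max})$ for all $k,i$, so the transformations $\varepsilon_k$ and the gains $\xi_k$ are well-defined and smooth on this interval. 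The goal is to show that each $\varepsilon_k(t)$ remains bounded on $[0, t_{\max})$, which rules out $e_{k,i}$ approaching $\pm 1$ and hence forces $t_{\max} = t_c$ by the usual contradiction argument.

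For the base case $k=1$, I would introduce the Lyapunov candidate $V_1(t) = \tfrac{1}{2}\varepsilon_1^\top \varepsilon_1$ and differentiate along the closed-loop $x_1$-dynamics. Writing $x_2 = r_2 + (x_2 - r_2)$, the chain rule combined with \eqref{eq:stage I} yields a $\dot{V}_1$ that contains the dominant term $-\kappa_1\, \varepsilon_1^\top \xi_1 \, g_1(\overline{x}_1)\, \xi_1 \varepsilon_1$, together with bounded residual terms coming from $f_1$, the disturbance $w_1$, the mismatch $x_2 - r_2$, and the time derivatives of $\gamma_{1,s}, \gamma_{1,d}$. Since $\xi_1$ is diagonal and positive definite, Assumption \ref{assum:pd} guarantees $\varepsilon_1^\top \xi_1 g_1 \xi_1 \varepsilon_1 \geq \underline{g_1}\, \|\xi_1 \varepsilon_1\|^2$; combined with local Lipschitzness, this makes $\dot{V}_1$ strictly negative outside a compact sublevel set, so $\varepsilon_1$ is bounded on $[0, t_{\max})$. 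Boundedness of $\varepsilon_1$ and $\xi_1$ in turn implies $r_2$ is bounded, and Assumption \ref{assum:lip} together with the smoothness of $\gamma_{1,\cdot}$ makes $\dot{r}_2$ bounded as well.

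For the inductive step at stage $k \geq 2$, assuming $\varepsilon_1, \ldots, \varepsilon_{k-1}$, $r_k$, and $\dot{r}_k$ have been shown bounded, I would repeat the argument with $V_k = \tfrac{1}{2}\varepsilon_k^\top \varepsilon_k$. The $x_k$-dynamics in \eqref{eqn:sysdyn} give $\dot{e}_k = \gamma_{k,d}^{-1}(f_k + g_k x_{k+1} + w_k - \dot{r}_k) - \gamma_{k,d}^{-1}\dot{\gamma}_{k,d} e_k$, and substituting $x_{k+1} = r_{k+1} + (x_{k+1} - r_{k+1})$ with the choice $r_{k+1} = -\kappa_k \varepsilon_k \xi_k$ produces the same dominant negative term $-\kappa_k \varepsilon_k^\top \xi_k g_k \xi_k \varepsilon_k$ together with bounded perturbations. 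The key estimate again uses Assumption \ref{assum:pd} via $\underline{g_k}$ to dominate the residuals, yielding boundedness of $\varepsilon_k$ and then of $r_{k+1}$ and $\dot{r}_{k+1}$. At stage $N$, $r_{N+1}$ is the actual input $u$, closing the loop. Bounded $\varepsilon_k$ on $[0,t_{\max})$ prevents $e_{k,i}$ from hitting $\pm 1$, contradicting maximality unless $t_{\max} = t_c$, which is the desired conclusion.

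The main obstacle is the propagation of boundedness up the cascade: at each stage one must control not just $r_k$ but also $\dot{r}_k$, which depends on $\varepsilon_{k-1}$, $\xi_{k-1}$, and their time derivatives, and the latter bring in $\dot{\gamma}_{k-1,s}, \dot{\gamma}_{k-1,d}$ and the $\overline{x}_{k-1}$-dynamics. Handling this cleanly requires showing on the maximal interval that all intermediate quantities remain in compact sets so that Assumption \ref{assum:lip} supplies uniform bounds on $f_k, g_k$; the MIMO case is accommodated by exploiting that $\xi_k$ is a positive diagonal matrix, so $\xi_k g_k \xi_k$ has the same symmetric-part sign definiteness as $g_k$, and the lower bound $\underline{g_k}$ transfers unchanged to the Lyapunov inequality.
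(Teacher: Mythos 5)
Your proposal is correct and follows essentially the same route as the paper's proof: a stage-wise backstepping argument with the Lyapunov candidate $V_k = \tfrac{1}{2}\varepsilon_k^\top\varepsilon_k$, the sign-definiteness of $g_{k,s}$ from Assumption \ref{assum:pd} to dominate the bounded residuals, and a maximal-interval contradiction to extend the solution. If anything, your explicit decomposition $x_{k+1} = r_{k+1} + (x_{k+1}-r_{k+1})$ treats the inter-stage tracking mismatch more carefully than the paper, which substitutes $x_{k+1} = r_{k+1}$ directly and absorbs nothing into $\Phi_k$ for that term.
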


\begin{proof}
The proof is done for the stages mentioned above. First, we prove the control law for the STT, and then sequentially prove it for the other stages. 

In each stage, the proof proceeds in three steps. First, we show that there exists a maximal solution for the normalized error $e_k: [0,\tau_{\max}] \rightarrow \mathbb{D}$, which implies that $e_k(x_{k},t)$ remains within $\mathbb{D}$ in the maximal time solution interval $[0, \tau_{\max})$. Next, we show that the proposed control law in \eqref{eqn:Control_ras} constraints $e_k(x_{k},t)$ to a compact subset of $\mathbb{D}$. Finally, we prove that $\tau_{\max}$ can be extended to $\infty$.

\noindent\textbf{Stage $1$:}

Differentiating the normalized error $e_1(x_1,t)$ with respect to time and substituting the first equation of the system dynamics \eqref{eqn:sysdyn}, we get
\begin{align*}
     \dot{e}_1 =& 2(\gamma_{1,d}(t))^{-1} \Big( f_1(x_1) + g_1(x_1)x_2 + w_1  
     \\ & - \frac{1}{2}(\dot{\gamma}_{1,s}(t) + \dot{\gamma}_{1,d}(t) e_1)\Big) :=h_1(t,e_1),
\end{align*}
where $x_1 = \frac{\gamma_{1,d}(t)e_1 + \gamma_{1,s}(t)}{2}$.
We also define the constraints for $e_1$ through the open and bounded set $\mathbb{D}:=(-1, 1)^n$. 

\textit{\underline{Step $(i)$}}:
Since the initial state $x_1(0)$ satisfies $\gamma_{1,i,L}(0) < x_{1,i}(0) < \gamma_{1,i,U}(0), \forall i \in [1;n]$, the initial normalized error $e_1(0)$ is also within the constrained region $\mathbb{D}$. Further, the tube functions are bounded and continuously differentiable functions of time, the functions $f_1(x_1)$ and $g_1(x_1)$ are locally Lipschitz and the control law $r_2(x_1,t)$ is smooth over $\mathbb{D}$. As a consequence, $h_1(t,e_1)$ is bounded and continuously differentiable on $t$ and locally Lipschitz on $e_1$ over $\mathbb{D}$. 

Therefore, according to \cite[Theorem 54]{sontag}, there exists a maximal solution to the initial value problem $\dot{e}_1 = h_1(t,e_1), e_1(0) \in \mathbb{D}$ on the time interval $[0, \tau_{\max})$ such that $e_1(t) \in \mathbb{D}, \forall t \in [0, \tau_{\max}).$

\textit{\underline{Step $(ii)$}}:
Consider the following positive definite and radially unbounded Lyapunov function candidate: $V_1 = \frac{1}{2}\varepsilon_1^\top\varepsilon_1$. 

Differentiating $V_1$ with respect to $t$ and substituting $\dot{\varepsilon_1}$, $\dot{e_1}$, system dynamics \eqref{eqn:sysdyn}, and the control strategy (\ref{eqn:Control_ras}), we have:
\begin{align*}
    \dot{V}_1 &= \varepsilon_1^T \dot{\varepsilon}_1= \varepsilon_1^T \frac{2}{1-e_1^Te_1}\dot{e}_1 = \varepsilon_1^T \xi_1 \left(\dot{x}_1 - \frac{1}{2}(\dot{\gamma}_{1,s}+\dot{\gamma}_{1,d}e_1)\right) \\
    &= \varepsilon_1^T \xi_1 \left( f_1(x_1) + g_1(x_1)x_2 +w_1  - \frac{1}{2}(\dot{\gamma}_{1,s}+\dot{\gamma}_{1,d}e_1) \right), \\
    &= \varepsilon_1^T \xi_1 \left( f_1(x_1) -\kappa_1 g_1(x_1) \xi_1 \varepsilon_1 + w_1 -\frac{1}{2}(\dot{\gamma}_{1,s}+\dot{\gamma}_{1,d} e_1) \right).
\end{align*}
Using Rayleigh-Ritz inequality and Assumption \ref{assum:pd},
\begin{align*}
    \underline{g_1}\|\varepsilon_1\|^2 \|\xi_1\|^2 \leq \lambda_{\min}(g_1(x_1))\|\varepsilon_1\|^2 \|\xi_1\|^2  \leq \varepsilon_1^\top \xi_1 g_1(x_1) \xi_1 \varepsilon_1, \\
    -\kappa_1 \varepsilon_1^\top \xi_1 g_1(x_1) \xi_1 \varepsilon_1 \leq -\kappa_1 \underline{g_1}\|\varepsilon_1\|^2 \|\xi_1\|^2 = - \kappa_g^1\|\varepsilon_1\|^2 \|\xi_1\|^2.
\end{align*}
Therefore, 
$\dot{V}_1 \leq -\kappa_g^1\|\varepsilon_1\|^2 \|\xi_1\|^2 + \|\varepsilon_1\| \|\xi_1\| \|\Phi_1\|,$
where $\Phi_1 := f_1(x_1) + w_1 - \frac{1}{2}\dot{\gamma}_{1,s} - \frac{1}{2}\dot{\gamma}_{1,s}e_1$. 
From the construction of $\gamma_{1,s}, \gamma_{1,d}$  we know that $\dot{\gamma}_{1,s}$ and $\dot{\gamma}_{1,d}$ are bounded by construction. From step 1, we have $e_1(t) \in \mathbb{D}$ and, consequently, $\forall i \in [1;n]$, $x_{1,i}(t) \in (\gamma_{1,i,L}(t), \gamma_{1,i,U}(t))$. Thus, owing to the continuity of $f_1(x_1)$ and employing the extreme value theorem, we can infer $\|f_1(x_1)\| < \infty$. 
Hence, $\|\Phi_1\| < \infty, \forall t \in [0, \tau_{max})$. 

Now add and substract $\kappa_g^1\theta\norm{\varepsilon_1}^2 \|\xi_1\|^2$, where $\theta\in(0,1)$.
\begin{align*}
    &\dot{V}_1 \leq -\kappa_g^1(1-\theta)\norm{\varepsilon_1}^2 \|\xi_1\|^2 \\ & 
    \quad \quad \quad - \norm{\varepsilon_1}\|\xi_1\| \left(\kappa_g^1 \theta \norm{\varepsilon_1}\|\xi_1\| - \|\Phi_1\| \right) \\
    &\leq -\kappa_g^1(1-\theta)\norm{\varepsilon_1}^2 \norm{\xi_1}^2, \forall \kappa_g^1 \theta \norm{\varepsilon_1}\|\xi_1\| - \|\Phi_1\| \geq 0 \\
    &\leq -\kappa_g^1(1-\theta)\norm{\varepsilon_1}^2 \|\xi_1\|^2, \forall \norm{\varepsilon_1} \geq \frac{\|\Phi_1\|}{\kappa_g^1 \theta \|\xi_1\|} \nonumber,\forall t \in [0, \tau_{\max}).
\end{align*}
Therefore, we can conclude that there exists a time-independent upper bound $\varepsilon_1^* \in \mathbb{R}_{0}^+$ to the transformed error $\varepsilon_1$, i.e., $\|\varepsilon_1\| \leq \varepsilon_1^*, \forall t \in [0, \tau_{\max})$.

Consequently, taking inverse logarithmic function,
\begin{align*}
    -1 < \frac{e_{1,i}^{-\varepsilon_{1,i}^*}-1}{e_{1,i}^{-\varepsilon_{1,i}^*}+1} =: e_{1,i,L} \leq e_{1,i} \leq e_{1,i,U} := \frac{e_{1,i}^{\varepsilon_{1,i}^*}-1}{e_{1,i}^{\varepsilon_{1,i}^*}+1} < 1, \nonumber \\
    \forall t \in [0, \tau_{\max}), \ \text{for } i \in [1;n].
\end{align*}
Therefore, by employing the control law (\ref{eqn:Control_ras}), we can constrain $e_1$ to a compact subset of $\mathbb{D}$ as:
\begin{align} \label{eqn:e_compact}
    e_1(t) \in [e_{1,L}, e_{1,U}] =: \mathbb{D}' \subset \mathbb{D}, \forall t \in [0, \tau_{\max}), \nonumber
\end{align}
where $e_{1,L} = [e_{1,1,L}, \ldots, e_{1,n,L}]^\top$ and $e_{1,U} = [e_{1,1,U}, \ldots, e_{1,n,U}]^\top$.\\

\textit{\underline{Step $(iii)$}}:
Finally, we prove $\tau_{\max}$ can be extended to $\infty$. 
We know that $e_1(t) \in \mathbb{D}', \forall t \in [0, \tau_{\max})$, where $\mathbb{D}'$ is a non-empty compact subset of $\mathbb{D}$.
However, if $\tau_{\max} < \infty$ then according to \cite[Proposition C.3.6]{sontag}, $\exists t' \in [0, \tau_{\max})$ such that $e_1(t) \notin \mathbb{D}$. This leads to a contradiction!
Hence, we conclude that $\tau_{\max}$ can be extended to $\infty$.\\

\noindent\textbf{Stage $k$ ($k\in [2;N]$):}

Differentiating the normalized error $e_k(x_{k},t)$ with respect to time and substituting the corresponding equations of the system dynamics \eqref{eqn:sysdyn}, we get
\begin{align*}
    \dot{e_k} = (\gamma_{k,d}(t))^{-1} ( f_k(x_{k}) + g_k(x_{k})x_{k+1} + w_k  \\ - (\dot{r}_k(z_k,t) + \dot{\gamma}_{k,d}(t) e_k)) : =h_k(t,e_k).
\end{align*}
We also define the constraints for $e_k$ through the open and bounded set $\mathbb{D}:=(-1, 1)^n$. \\

\textit{\underline{Step $(i)$}}:
Since the initial state $x_{k}(0)$ satisfies $-\gamma_{k,i}(0) < x_{k,i}(0) < \gamma_{k,i}(0) ,\forall i \in [1;n]$, the initial normalized error $e_k(0)$ is also within the constrained region $\mathbb{D}$. Further, the tube functions are bounded and continuously differentiable functions of time, the functions $f_k(x_{k})$ and $g_k(x_{k})$ are locally Lipschitz and the control law $r_{k+1}(z_{k+1},t)$ is smooth over $\mathbb{D}$. As a consequence, $h_k(t,e_k)$ is bounded and continuously differentiable on $t$ and locally Lipschitz on $e_k$ over $\mathbb{D}$. 

Therefore, according to \cite[Theorem 54]{sontag}, there exists a maximal solution to the initial value problem $\dot{e_k} = h_k(t,e_k), e_k(0) \in \mathbb{D}$ on the time interval $[0, \tau_{\max})$ such that $e_k(t) \in \mathbb{D}, \forall t \in [0, \tau_{\max}).$\\

\textit{\underline{Step $(ii)$}}:
Consider the following positive definite and radially unbounded Lyapunov function candidate: $V_k = \frac{1}{2}\varepsilon_k^\top\varepsilon_k$. 

Differentiating $V_k$ with respect to $t$ and substituting $\dot{\varepsilon_k}$, $\dot{e_k}$, system dynamics \eqref{eqn:sysdyn}, and the control strategy (\ref{eqn:Control_ras}), we obtain:
\begin{align*}
    &\dot{V}_k = \varepsilon_k^T \dot{\varepsilon}_k= \varepsilon_k^T \frac{2}{1-e_k^Te_k}\dot{e}_k \\
    & = \frac{1}{2} \varepsilon_k^T \xi_k \left(\dot{x}_k - \dot{r}_k (z_k,t) - \dot{\gamma}_{k,d}e_k\right) \\
    &= \frac{1}{2} \varepsilon_k^T \xi_k \big( f_k(x_{k}) + g_k(x_{k})x_{k+1} +w_k  - \dot{r}_k(z_k,t) - \dot{\gamma}_{k,d}e_k\big) \\
    &= \frac{1}{2} \varepsilon_k^T \xi_k \big( f_k(x_{k}) -\kappa_k g_k(x_{k}) \xi_k \varepsilon_k + w_k - \dot{r}_k(z_k,t) - \dot{\gamma}_{k,d}e_k\big).
\end{align*}
Using Rayleigh-Ritz inequality and Assumption \ref{assum:pd},
\begin{align*}
    \underline{g_k}\|\varepsilon_k\|^2 \|\xi_k\|^2 & \leq \lambda_{\min}(g_k(x_{k}))\|\varepsilon_k\|^2 \|\xi_k\|^2 \\ & \leq \varepsilon_k^\top \xi_k g_k(x_{k}) \xi_k \varepsilon_k, \\
    \implies -\frac{1}{2}\kappa_k \varepsilon_k^\top \xi_k g_k(x_{k}) \xi_k \varepsilon_k & \leq -\frac{1}{2} \kappa_k \underline{g_k}\|\varepsilon_k\|^2 \|\xi_k\|^2 \\ &= - \kappa_g^k\|\varepsilon_k\|^2 \|\xi_k\|^2.
\end{align*}
Therefore, 
$\dot{V}_k \leq -\kappa_g^k\|\varepsilon_k\|^2 \|\xi_k\|^2 + \|\varepsilon_k\| \|\xi_k\| \|\Phi_k\|,$
where $\Phi_k := \frac{1}{2}\left(f_k(x_{k}) + w_k - \dot{r}_k(z_k,t) - \dot{\gamma}_{k,d}e_k\right)$. 
From the construction of $ \gamma_{k,d}$ we know that $\dot{\gamma}_{k,d}$ is bounded. From step k-a, we have $e_k(t) \in \mathbb{D}$ and consequently $\forall i \in [1;n]$, $x_{k,i}(t) \in (-\gamma_{k,i}(t), \gamma_{k,i}(t))$. Thus, owing to the continuity of $f_k(x_{k})$ and employing the extreme value theorem, we can infer $\|f_k(x_{k})\| < \infty$. Also, since $r_k$ is bounded, we can say that $\dot{r}_k$ is bounded.  
Hence, $\|\Phi_k\| < \infty, \forall t \in [0, \tau_{max})$. 

Now add and substract $\kappa_g^k\theta\norm{\varepsilon_k}^2 \|\xi_k\|^2$, where $\theta\in(0,1)$.
\begin{align*}
    \dot{V_k} &\leq -\kappa_g^k(1-\theta)\norm{\varepsilon_k}^2 \|\xi_k\|^2 - \\ &\quad \quad \quad \norm{\varepsilon_k}\|\xi_k\| \left(\kappa_g^k \theta \norm{\varepsilon_k}\|\xi_k\| - \|\Phi_k\| \right) \\
    &\leq -\kappa_g^k(1-\theta)\norm{\varepsilon_k}^2 \norm{\xi_k}^2, \forall \kappa_g^k \theta \norm{\varepsilon_k}\|\xi_k\| - \|\Phi_k\| \geq 0 \\
    &\leq -\kappa_g^k(1-\theta)\norm{\varepsilon_k}^2 \|\xi_k\|^2, \forall \norm{\varepsilon_k} \geq \frac{\|\Phi_k\|}{\kappa_g^k \theta \|\xi_k\|} \nonumber,\\
    & \forall t \in [0, \tau_{\max}).
\end{align*}
Therefore, we can conclude that there exists a time-independent upper bound $\varepsilon_k^* \in \mathbb{R}_{0}^+$ to the transformed error $\varepsilon_k$, i.e., $\|\varepsilon_k\| \leq \varepsilon_k^*, \forall t \in [0, \tau_{\max})$.

Consequently, taking inverse logarithmic function,
\begin{align*}
    -1 < \frac{e_{k,i}^{-\varepsilon_{k,i}^*}-1}{e_{k,i}^{-\varepsilon_{k,i}^*}+1} =: e_{k,i,L} \leq e_{k,i} \leq e_{k,i,U} := \frac{e_{k,i}^{\varepsilon_{k,i}^*}-1}{e_{k,i}^{\varepsilon_{k,i}^*}+1} < 1, \nonumber \\
    \forall t \in [0, \tau_{\max}), \ \text{for } i \in [1;n].
\end{align*}

Therefore, by employing the control law (\ref{eqn:Control_ras}), we can constrain $e_k$ to a compact subset of $\mathbb{D}$ as:
\begin{align*}
    e_k(t) \in [e_{k,L}, e_{k,U}] =: \mathbb{D}' \subset \mathbb{D}, \forall t \in [0, \tau_{\max}), \nonumber
\end{align*}
$e_{k,L} = [e_{k,1,L}, \ldots, e_{k,n,L}]^\top$, $e_{k,U} = [e_{k,1,U}, \ldots, e_{k,n,U}]^\top$.\\

\textit{\underline{Step $(iii)$}}:
Finally, we prove that $\tau_{\max}$ can be extended to $\infty$. 
We know that $e_k(t) \in \mathbb{D}', \forall t \in [0, \tau_{\max})$, where $\mathbb{D}'$ is a non-empty compact subset of $\mathbb{D}$.
However, if $\tau_{\max} < \infty$ then according to \cite[Proposition C.3.6]{sontag}, $\exists t' \in [0, \tau_{\max})$ such that $e_k(t) \notin \mathbb{D}$. This leads to a contradiction.
So, we conclude that $\tau_{\max}$ can be extended to $\infty$.


Thus, the control strategy in \eqref{eqn:Control_ras} ensures the agent's trajectory remains within their STTs, thereby completing the proof.
\end{proof}

\begin{theorem}\label{th:T-RAS_CA}
    Given the control input for $j$-th agent, for all $j \in [1;M] $, as discussed in Theorem \ref{theorem_inside}, the agents in the multi-agent systems will satisfy their local temporal reach-avoid-stay (T-RAS) tasks while avoiding inter-agent collisions.
\end{theorem}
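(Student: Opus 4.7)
The plan is to combine the two pillars already established earlier in the paper: the tube construction from Theorem \ref{th:constr} (plus Lemma \ref{lem:CA_condition}) which certifies that the STTs themselves encode both the T-RAS and CA specifications, and the control result of Theorem \ref{theorem_inside} which certifies that each agent's output trajectory is confined to its assigned STT. The proof of Theorem \ref{th:T-RAS_CA} is then a straightforward composition, so the write-up will be short; no new analytic machinery is needed.

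First I would invoke Theorem \ref{th:constr} for each agent $j \in [1;M]$: assuming the SOP in \eqref{eq:SOP} was solved with $\eta_S^* + \mathcal{L}\epsilon \leq 0$, the resulting functions $\gamma_{i,L}^{(j)}(c_{i,L}^{(j)},t)$ and $\gamma_{i,U}^{(j)}(c_{i,U}^{(j)},t)$ satisfy Definition \ref{def:stt} as well as condition \eqref{eq:collision}. This gives four properties at the tube level: (a) the initial box is contained in $\So^{(j)}$, (b) the terminal box is contained in $\T^{(j)}$, (c) the tube stays inside $\Y \setminus \U(t)$ for all $t \in [0,t_c]$, and (d) for every pair $j \neq k$ the two tubes are disjoint at every $t$, which by Lemma \ref{lem:CA_condition} implies the separation condition \eqref{eq:col}.

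Next I would apply Theorem \ref{theorem_inside} agent-by-agent: since each agent is initialized inside its STT and the closed-form decentralized controller in \eqref{eqn:Control_ras} depends only on the agent's own state and its own tube, the output $y^{(j)}(t) = x_1^{(j)}(t)$ satisfies $\gamma_{i,L}^{(j)}(t) < y_i^{(j)}(t) < \gamma_{i,U}^{(j)}(t)$ for every $i \in [1;n]$ and every $t \geq 0$. In particular $y^{(j)}(t) \in \prod_{i=1}^n [\gamma_{i,L}^{(j)}(t), \gamma_{i,U}^{(j)}(t)]$ on $[0,t_c]$. Conjoining this containment with the tube-level properties (a)--(c) immediately yields the three bullet points of Definition \ref{def:tras}, so each agent's T-RAS specification holds.

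Finally, for inter-agent collision avoidance, I would note that for any pair $j \neq k$ and any $t \in [0,t_c]$, both $y^{(j)}(t)$ and $y^{(k)}(t)$ lie in their respective tubes, which are disjoint by property (d). Hence $y^{(j)}(t) \neq y^{(k)}(t)$, i.e.\ $\|y^{(j)}(t) - y^{(k)}(t)\| > 0$, which is exactly Definition \ref{def:ca}. There is no real obstacle here because the heavy lifting has already been done: the soundness of the sampling-based STT construction (Theorem \ref{th:constr}) and the invariance of the STTs under the backstepping controller (Theorem \ref{theorem_inside}) are the only ingredients; the present theorem is essentially the observation that these two results compose cleanly because the controllers are decentralized and the tube-disjointness is enforced at synthesis time rather than at run time.
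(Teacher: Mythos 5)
Your proposal is correct and follows essentially the same route as the paper's own proof: it composes Theorem~\ref{th:constr} (the STTs satisfy Definition~\ref{def:stt} and the disjointness condition~\eqref{eq:collision}) with Theorem~\ref{theorem_inside} (each agent's trajectory remains in its tube) to conclude both T-RAS satisfaction and inter-agent collision avoidance. Your write-up is somewhat more explicit about the individual tube-level properties and the final disjointness argument, but the logical structure is identical to the paper's.
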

\begin{proof}
    As derived in Section III, the STTs guarantee to reach the corresponding target within the specified time. Since the control input from Theorem \ref{theorem_inside} guarantees that the system trajectory of $j$-th agent is inside its corresponding tubes, for all $j \in [1;M]$, it ensures that the T-RAS specification is fulfilled for each agent.\\
    Also, the tubes are collision-free by design due to Equation \eqref{eq:collision}. Hence, keeping the trajectory of each agent inside the tube will formally guarantee the avoidance of inter-agent collision, thereby completing the proof.
\end{proof}

\begin{remark}
Note that the closed-form time-dependent control law \eqref{eqn:Control_ras} is decentralized and approximation-free, since it does not depend on the knowledge of the system dynamics. The gain $\kappa_i$ is a user-defined positive design parameter that regulates how aggressively the control law ensures the system state remains within the STT. Additionally, if $g_{i,s}^{(j)}$ is negative definite, $\kappa_{i}^{(j)}$ (in control law \eqref{eqn:Control_ras}) $\in \R \setminus \R_0^+$.
\end{remark}

\section{Case Study}\label{case-study}
Here, we consider two different cases to show the efficacy of our proposed approach. First, we consider multiple omnidirectional robots, and multiple drones. All computations were performed on a machine with an Intel i7-7700 CPU, 32GB RAM, and a Linux Ubuntu operating system. 

\subsection{Omnidirectional Robots}

The states of the $j$-th omnidirectional robot, which is adapted from \cite{NAHS}, are defined as 
\begin{align}
    \begin{bmatrix}
        \dot{x}_1^{(j)} \\ \dot{x}_2^{(j)} \\ \dot{x}_3^{(j)}
    \end{bmatrix}
    = 
    \begin{bmatrix}
        \cos{x_3^{(j)}} & -\sin{x_3^{(j)}} & 0 \\ \sin{x_3^{(j)}} & \cos{x_3^{(j)}} & 0 \\ 0 & 0 & 1
    \end{bmatrix}
    \begin{bmatrix}
        v_1^{(j)} \\ v_2^{(j)} \\ \omega^{(j)}
    \end{bmatrix} + w(t),
\end{align}
where the state vector $[x_1^{(j)}, x_2^{(j)}, x_3^{(j)}]^\top$ captures the pose of the $j$-th robot, $[v_1^{(j)}, v_2^{(j)}, \omega^{(j)}]^\top$ is the input velocity vector in the frame of $j$-th robot, and $w$ is an external disturbance. 

We consider a group of four robots, moving in a $5 \times 5$ arena with the unsafe set positioned as shown in Figure \ref{fig:sim1}. The start and goal locations for the robots are given in Table \ref{tab:start_goal_omni}.
\begin{table}[ht]
    \centering
    \caption{Start and goal location of each robot}
    \begin{tabular}{ccc}
    \hline
     Robot no. & Start location & Goal location  \\ \hline
     1 & $[4.5,5]\times[0,0.5]$ & $[0,0.5]\times[0,0.5]$ \\
     2 & $[2.5,3]\times[0,0.5]$ & $[4.5,5]\times[0,0.5]$ \\
     3 & $[4.5,5]\times[4.5,5]$ & $[2.5,3]\times[0,0.5]$ \\
     4 & $[0,0.5]\times[0,0.5]$ & $[4.5,5]\times[4.5,5]$ \\
     \hline
    \end{tabular}
    \label{tab:start_goal_omni}
\end{table}

The time bound for reaching the target is $t_c = 10$. The unsafe set (red) is assumed to be present at the same location for the entire time-horizon $[0,t_c]$. As discussed in Section \ref{data_driven}, we consider the template for the STTs as second-order polynomials in time $t$ for robots 1-3 and third-order polynomials in time $t$ for robot 4. The STTs for all robots obtained to solve the T-RAS and CA task for this case are given by the curves in Table \ref{tab:stt1}.


\begin{table}[ht]
\centering
\caption{STTs of each robot}
\resizebox{0.5\textwidth}{!}{
\begin{tabular}{ccll}
\hline
\multicolumn{1}{l}{Robot} & \multicolumn{1}{l}{Dim} & $\gamma_L(c_L,t)$                            & $\gamma_U(c_U,t)$                              \\ \hline
\multirow{2}{*}{1}        & 1                       & $4.5 - 0.8955t + 0.0445t^2$           & $ 5 - 1.0515t + 0.0602t^2$              \\
                          & 2                       & $0 + 0t + 0t^2$                       & $0.5 - 0.156t + 0.0156t^2$              \\ \hline
\multirow{2}{*}{2}        & 1                       & $0 + 1.0347t - 0.0585t^2$             & $0.5 + 0.8787t - 0.0429t^2$             \\
                          & 2                       & $2.5 + 0.4264t - 0.0676t^2$           & $3 + 0.2704t - 0.052t^2$                \\ \hline
\multirow{2}{*}{3}        & 1                       & $4.5 + 0.0883t - 0.0538t^2$           & $5 - 0.0678t -0.0382t^2$                \\
                          & 2                       & $4.5 + 0.1665t - 0.0367t^2$           & $5 + 0.0105t - 0.0211t^2$               \\ \hline
\multirow{2}{*}{4}        & 1                       & $0 + 3.9463t - 0.9857t^2 + 0.0636t^3$ & $0.5 + 3.8711t - 0.9928t^2 + 0.0651t^3$ \\
                          & 2                       & $0 + 0.4283t + 0.0009t^2 + 0.0001t^3$ & $0.5 + 0.1945t + 0.0422t^2 - 0.0017t^3$ \\ \hline
\end{tabular}
}
\label{tab:stt1}
\end{table}

The Lipschitz constants $\mathcal{L}_L$ and $\mathcal{L}_U$ are estimated to be $3.946$ and $3.871$ using Algorithm \ref{algo:Lipschitz}, so $\mathcal{L} = 7.817$ and consider $\epsilon = 0.002$. We have used the Z3 SMT solver \cite{z3} to solve the Optimization problem. The $\eta_S^*$ upon solving the SOP is $-0.05$, which essentially means $\eta_S^* + \mathcal{L}\epsilon = -0.05 + 7.817\times0.002 = -0.0344< 0$, that follows Theorem \ref{th:constr}. The computation time to obtain the STTs via solving SOP in \eqref{eq:SOP} is 208 seconds.

The trajectory of the robots under the influence of the proposed control law in \eqref{eqn:Control_ras} with the STTs for each robot in each dimension, are shown in Figure \ref{fig:sim1}. 

\begin{figure*}[t]
    \centering
    \includegraphics[width =0.85\textwidth]{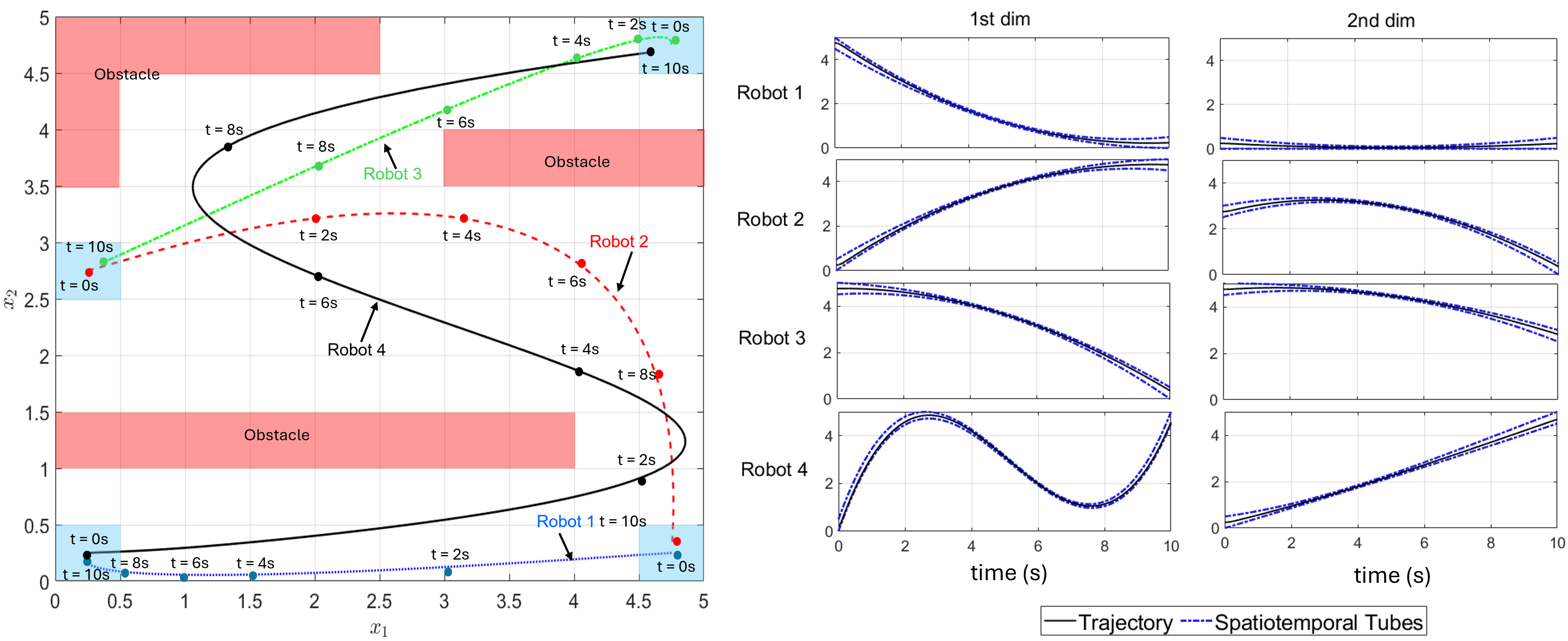}
    \caption{Spatiotemporal tubes for T-RAS tasks in omnidirectional robots.}
    \label{fig:sim1}
\end{figure*}

\begin{figure*}[t]
    \centering
    \includegraphics[width =0.85\textwidth]{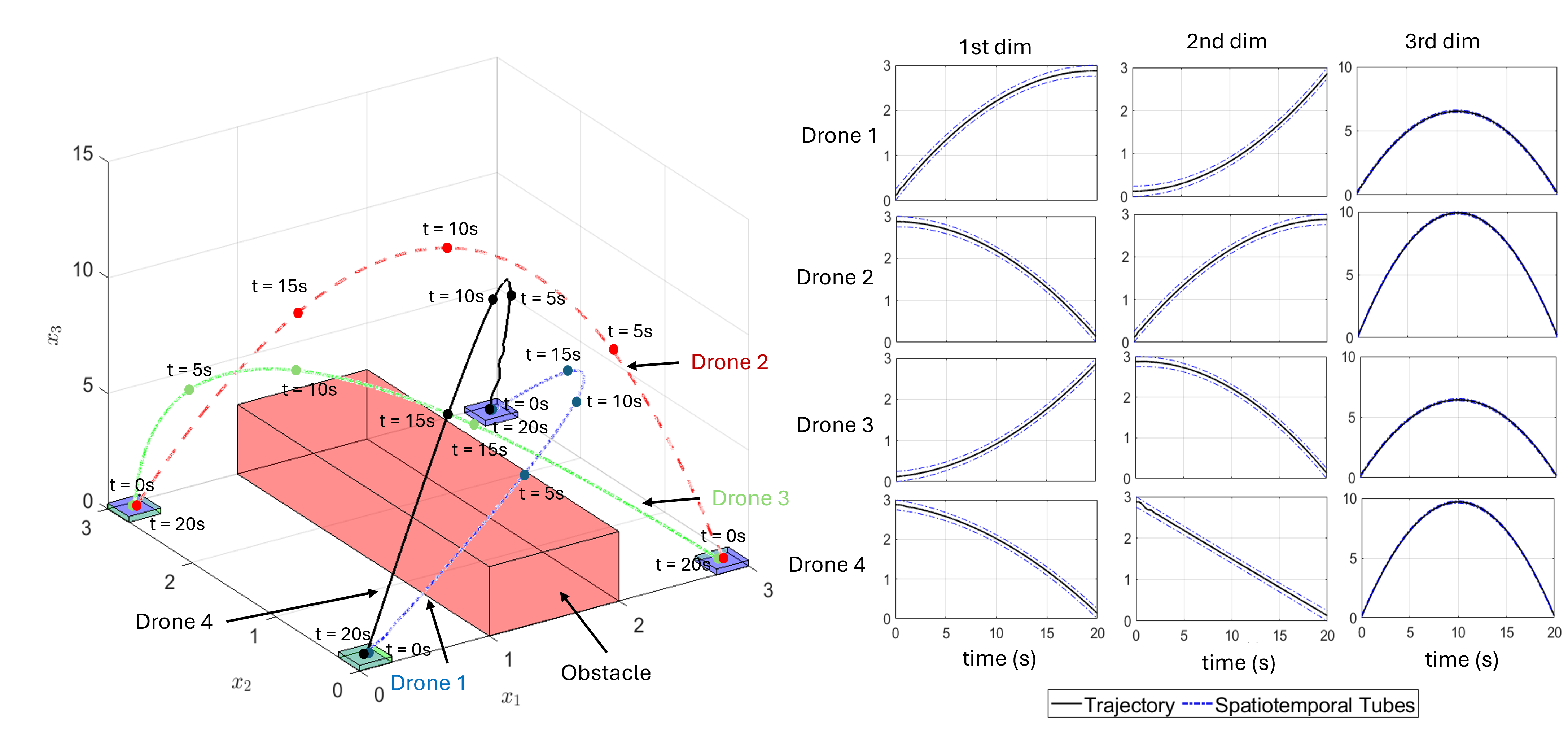}
    \caption{Spatiotemporal tubes for T-RAS tasks in drones.}
    \label{fig:sim2}
\end{figure*}

\subsection{Euler-Lagrange System- Multiple Drones}
The dynamics of the $j$-th drone is adapted from \cite{APF}, 
\begin{align}
    \left[\dot{x}^{(j)} \ \dot{y}^{(j)} \ \dot{z}^{(j)} \right]^\top    &= \left[v_x^{(j)} \ v_y^{(j)} \ v_z^{(j)} \right]^\top + w_1(t), \\
    \left[v_x^{(j)} \ v_y^{(j)} \ v_z^{(j)} \right]^\top &= \left[u_x^{(j)} \ u_y^{(j)} \ u_z^{(j)} \right]^\top + w_2(t),
\end{align}
where $[x^{(j)},y^{(j)},z^{(j)}]^\top$ captures the position of the $j$-th drone, $[v_x^{(j)},v_y^{(j)},v_z^{(j)}]^\top$ is the velocity of the $j$-th drone and $[u_x^{(j)},u_y^{(j)},u_z^{(j)}]^\top$ is the control input of the $j$-th drone. 

We consider a group of 4 drones, diagonally moving from one corner to another corner of a $[0,3]\times[0,3]\times[0,15]$ arena. The unsafe set is given by, $\U = [1,2]\times[0,3]\times[0,3]$. The start and goal location of each drone is given in Table \ref{tab:start_goal_drone}. 

\begin{table}[ht]
    \centering
    \caption{Start and goal location of each drone}
    \resizebox{0.5\textwidth}{!}{
    \begin{tabular}{ccc}
    \hline
     Robot no. & Start location & Goal location  \\ \hline
     1 & $[0,0.25]\times[0,0.25]\times[0,0.25]$ & $[2.75,3]\times[2.75,3]\times[0,0.25]$ \\
     2 & $[2.75,3]\times[0,0.25]\times[0,0.25]$ & $[0,0.25]\times[2.75,3]\times[0,0.25]$ \\
     3 & $[0,0.25]\times[2.75,3]\times[0,0.25]$ & $[2.75,3]\times[0,0.25]\times[0,0.25]$ \\
     4 & $[2.75,3]\times[2.75,3]\times[0,0.25]$ & $[0,0.25]\times[0,0.25]\times[0,0.25]$  \\
     \hline
    \end{tabular}
    }
    \label{tab:start_goal_drone}
\end{table}

The time bound for reaching the target is considered as $t_c = 20$. The unsafe set (red colored) is assumed to present at the same location for the complete time-horizon $[0,t_c]$. As discussed in Section \ref{data_driven}, we consider the template for the STTs as second-order polynomials in time $t$. The STTs for all the drones obtained to solve the T-RAS and CA task for this case are given by the curves in Table \ref{tab:stt2}.


\begin{table}[ht]
\centering
\caption{STTs of each drone}
\resizebox{0.5\textwidth}{!}{
\begin{tabular}{cccc}
\hline
Drone           & Dim & $\gamma_L(c_L,t)$            & $\gamma_L(c_L,t)$             \\ \hline
\multirow{3}{*}{1} & 1   & $0 + 0.2845t - 0.0074t^2$    & $ 0.25 + 0.2745t - 0.0069t^2$ \\
                   & 2   & $0 + 0.0068t + 0.0065t^2$    & $0.25 - 0.0032t + 0.007t^2$   \\
                   & 3   & $0 + 1.2874t - 0.0644t^2$    & $0.25 + 1.2774t - 0.0639t^2$  \\ \hline
\multirow{3}{*}{2} & 1   & $2.75 + 0.0026t - 0.007t^2$  & $3 - 0.0074t - 0.0065t^2$     \\
                   & 2   & $0 + 0.2767t - 0.007t^2$     & $0.25 + 0.2667t - 0.0065t^2$  \\
                   & 3   & $0 + 1.96t - 0.098t^2$       & $0.25 + 1.95t - 0.0975t^2$    \\ \hline
\multirow{3}{*}{3} & 1   & $0 + 0.0258t + 0.0056t^2$    & $0.25 + 0.0156t + 0.0061t^2$  \\
                   & 2   & $2.75 + 0.0117t - 0.0075t^2$ & $3 + 0.0017t - 0.007t^2$      \\
                   & 3   & $0 + 1.2658t - 0.0633t^2$    & $0.25 + 1.2558t - 0.0628t^2$  \\ \hline
\multirow{3}{*}{4} & 1   & $2.75 - 0.0296t - 0.0054t^2$ & $3 - 0.0396t - 0.0049t^2$     \\
                   & 2   & $2.75 - 0.1336t - 0.0002t^2$ & $3 - 0.1436t + 0.0003t^2$     \\
                   & 3   & $0 + 1.9175t - 0.0959^2$     & $0.25 + 1.9075t - 0.0954t^2$  \\ \hline
\end{tabular}
}
\label{tab:stt2}
\end{table}

The trajectory of the drones under the influence of the proposed control law in \eqref{eqn:Control_ras} with the STTs for each robot in each dimension, are shown in Figure \ref{fig:sim2}. 

\underline{\textit{Comparison:}}
Path planning algorithms \cite{qin2023review} offer a potential solution to the case studies, albeit lacking formal guarantees of solution. Even though Model Predictive Control \cite{dai2017distributed} offers a better solution to the problem, it lacks the decentralized control policy and requires approximate knowledge of the model dynamics. Conversely, symbolic control techniques \cite{tabuda, SCOTS} promise formal guarantees but come at the expense of increased computational complexity. Reinforcement learning-based approaches \cite{liu2020mapper}, \cite{gupta2017cooperative}, though they can handle the unknown dynamics, fail to provide a formal guarantee for the given tasks. CBF \cite{C3BF}\cite{Meng3} and Funnel-based control \cite{lindemann2019feedback} can provide formal guarantee to the task, but fail in case of the unknown dynamical system. Compared to these approaches, the STT framework provides a formal guarantee to solve the T-RAS problem and avoid inter-agent collision under a decentralized control policy. The STT, moreover, can handle the specifications for unknown dynamical systems. The comparison of different aspects of spatiotemporal tubes with state-of-the-art algorithms is shown in Table \ref{tab:comp}.

\begin{table*}[t]
\centering
\begin{threeparttable}
\caption{Comparing Spatiotemporal tubes with classical algorithms}
\begin{tabular}{lcccccccccccc}
\hline
\textbf{Algorithm} & \multicolumn{2}{c}{\textbf{\begin{tabular}[c]{@{}l@{}}Closed-form \\ Control \end{tabular}}} & \multicolumn{2}{c}{\textbf{\begin{tabular}[c]{@{}l@{}}Formal \\ Guarantee \end{tabular}}} & \multicolumn{2}{c}{\textbf{\begin{tabular}[c]{@{}l@{}}Prescribed-time \\ Reachability \end{tabular}}} & \multicolumn{2}{c}{\textbf{\begin{tabular}[c]{@{}l@{}}Unknown \\ Dynamics \end{tabular}}} & \multicolumn{2}{c}{\textbf{\begin{tabular}[c]{@{}l@{}}Bounded \\ Disturbance \end{tabular}}} & \multicolumn{2}{c}{\textbf{\begin{tabular}[c]{@{}l@{}}Time dependent \\ Obstacle \end{tabular}}} \\ \hline
RRT*\cite{M-RRT} & \multicolumn{2}{c}{-\tnote{1}} & \multicolumn{2}{c}{\xmark} & \multicolumn{2}{c}{-\tnote{2}} & \multicolumn{2}{c}{\cmark} & \multicolumn{2}{c}{-\tnote{3}} & \multicolumn{2}{c}{\cmark} \\
MPC\cite{MPC1}\cite{dai2017distributed} & \multicolumn{2}{c}{\xmark} & \multicolumn{2}{c}{\xmark} & \multicolumn{2}{c}{\xmark} & \multicolumn{2}{c}{\xmark} & \multicolumn{2}{c}{\xmark} & \multicolumn{2}{c}{\cmark} \\ 
RL\cite{liu2020mapper,gupta2017cooperative} & \multicolumn{2}{c}{\xmark} & \multicolumn{2}{c}{\xmark} & \multicolumn{2}{c}{\xmark} & \multicolumn{2}{c}{\cmark} & \multicolumn{2}{c}{\cmark} & \multicolumn{2}{c}{\cmark} \\
CBF-based methods \cite{C3BF}\cite{Meng3} & \multicolumn{2}{c}{\xmark} & \multicolumn{2}{c}{\cmark} & \multicolumn{2}{c}{\xmark} & \multicolumn{2}{c}{\xmark} & \multicolumn{2}{c}{\xmark} & \multicolumn{2}{c}{\xmark} \\
Symbolic Control\cite{SCOTS} & \multicolumn{2}{c}{\xmark} & \multicolumn{2}{c}{\cmark} & \multicolumn{2}{c}{\xmark} & \multicolumn{2}{c}{\xmark} & \multicolumn{2}{c}{\xmark} & \multicolumn{2}{c}{\xmark} \\
Funnel-based Control\cite{lindemann2019feedback} & \multicolumn{2}{c}{\cmark} & \multicolumn{2}{c}{\cmark} & \multicolumn{2}{c}{\xmark} & \multicolumn{2}{c}{\cmark} & \multicolumn{2}{c}{\cmark} & \multicolumn{2}{c}{\xmark} \\
STT (proposed) & \multicolumn{2}{c}{\cmark} & \multicolumn{2}{c}{\cmark} & \multicolumn{2}{c}{\cmark} & \multicolumn{2}{c}{\cmark} & \multicolumn{2}{c}{\cmark} & \multicolumn{2}{c}{\cmark} \\
\hline
\end{tabular}

\label{tab:comp}
\begin{tablenotes}
    \item [1] Additional mechanisms like PID and MPC are required for control.
    \item [2] Additional mechanisms are required to satisfy reachability within the prescribed time.
    \item [3] Additional mechanisms are required to handle bounded disturbances.
\end{tablenotes}
\end{threeparttable}
\end{table*}

\section{Conclusion and Future work}\label{conclusion}
The paper focuses on constructing the spatiotemporal tubes using a sampling-based approach from the available data of temporal unsafe sets, aiming to ensure the T-RAS task and prevent the collision between the tubes. Keeping the agents' trajectories within their corresponding tubes formally guarantees that the agents do not collide with each other while satisfying their respective T-RAS tasks. We formulate an SOP to gather obstacle data and model the tube so that it ensures that the T-RAS task is completed with a certified confidence of 1 and that no tubes collide with each other. Consequently, we can design a closed-form, approximation-free control law to retain the unknown systems' trajectories within the tubes. We showcase the success of our approach through two different case studies. 

In our future work, we plan to expand our approach by developing solutions that can accommodate arbitrary input constraints. This will allow for greater flexibility and applicability of our methods across different scenarios and systems, making them more robust and versatile.
Additionally, we aim to explore the application of STTs to satisfy STL specifications, thereby broadening the potential use cases and ensuring that our methods can be applied to a diverse set of tasks in future research.

\bibliographystyle{unsrt} 
\bibliography{sources} 

\appendix

\textbf{Computation of Lipschitz constants $\mathcal{L}_L$ and $\mathcal{L}_U$}
\begin{algorithm}
    \caption{Estimation of $\mathcal{L}_L$ using data}
    \label{algo:Lipschitz}
    \begin{algorithmic}[1]
        \State Select two time instances randomly $t_k$, $t_m$ such that $\lVert t_k - t_m\rVert \leq \alpha, \forall k,m \in [1;\overline{N}], \alpha \in \R_{>0}$ 
        \State Calculate $\theta_{i,k}^{(j)} = \gamma_{i,L}(c_{i,L}^{(j)},t_k) $ and $ \theta_{i,m}^{(j)} = \gamma_{i,L}(c_{i,L}^{(j)},t_m)$ where $\theta_{i,k}^{(j)}$ and $\theta_{i,m}^{(j)}$ denotes the lower bound of the tube in $i$-th dimension at $k$-th and $m$-th time instances for $j$-th agent. 
        \State Compute slope $S_{i,km}^{(j)} = \frac{\lVert \theta_{i,k}^{(j)} - \theta_{i,m}^{(j)} \rVert}{t_k - t_m} , \quad \forall j \neq k$ 
        \State Compute maximum slope as $\Psi_i^{(j)} = \max \{S_{i,km}^{(j)}|k,m \in [1;\overline{N}], k \neq m\}$ 
        \State Repeat steps 1-4 $R$ times and obtain $\Psi_{i,1}^{(j)}, \ldots, \Psi_{i,R}^{(j)}$ 
        \State Apply Reverse Weibull Distribution \cite{Lipschitz2} to $\Psi_{i,1}^{(j)}, \ldots, \Psi_{i,M}^{(j)}$, which gives us so-called location, scale and shape parameters
        \State The obtained \textit{location parameter} is denoted by $\mathcal{L}_i^{(j)}$
        \State Repeat steps 1-7 to get $\mathcal{L}_1^{(j)}, \ldots, \mathcal{L}_n^{(j)}$ 
        \State $\mathcal{L}_L^{(j)} = \max \{\mathcal{L}_1^{(j)}, \ldots, \mathcal{L}_n^{(j)}\}$
        \State Repeat steps 1-9 to get $\mathcal{L}_L^1, \ldots, \mathcal{L}_L^M$
        \State $\mathcal{L}_L = \max \{\mathcal{L}_L^1, \ldots, \mathcal{L}_L^M\}$
    \end{algorithmic}
\end{algorithm}

Employing the results of \cite{Lipschitz2}, we propose the Algorithm \ref{algo:Lipschitz} to estimate the Lipschitz constants of the STTs using a finite number of data collected from the tube boundary. Though we introduce the algorithm for the computation of $\mathcal{L}_L$, one can leverage a similar algorithm to estimate $\mathcal{L}_U$, following the same procedure.

 The following Lemma \ref{lem:Lipschitz}, borrowed from \cite{Lipschitz2}, under the proposed algorithm ensures the convergence of the estimated Lipschitz constants to their actual values.

\begin{lemma}[\cite{FV_DD}]\label{lem:Lipschitz}
    In the verge of Algorithm \ref{algo:Lipschitz}, the estimated Lipschitz constants $\mathcal{L}_L$ and $\mathcal{L}_U$ tends to their actual values if and only if $\alpha$ goes to zero with $\overline{N}, M$ tends to infinity.
\end{lemma}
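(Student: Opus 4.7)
The plan is to establish convergence of the maximum finite-difference slope to the true Lipschitz constant through two limits (in $\alpha$ and in $\overline{N}$), and then invoke extreme value theory to justify that fitting a Reverse Weibull distribution to the collected maxima yields a location parameter equal to the true supremum in the limit $M\to\infty$. Since the statement is symmetric between $\mathcal{L}_L$ and $\mathcal{L}_U$, I would write the argument for $\mathcal{L}_L$; the treatment of $\mathcal{L}_U$ is identical.

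First, I would recall that for a continuously differentiable function $\gamma_{i,L}(c_{i,L}^{(j)},\cdot)$ on the compact interval $[0,t_c]$, the true Lipschitz constant is $\sup_{t\in[0,t_c]} |\partial_t \gamma_{i,L}(c_{i,L}^{(j)},t)|$. By the mean value theorem, for each sampled pair $(t_k,t_m)$ with $\|t_k-t_m\|\le \alpha$, the slope $S_{i,km}^{(j)}$ equals $|\partial_t \gamma_{i,L}(c_{i,L}^{(j)},\tau_{km})|$ for some intermediate $\tau_{km}$. As $\alpha\to 0$, these sampled slopes become pointwise values of the derivative magnitude, and as $\overline{N}\to\infty$ the sampled intermediate points densify in $[0,t_c]$. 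Combining continuity of $\partial_t \gamma_{i,L}$ with this density and the extreme value theorem on the compact interval, the per-trial maximum $\Psi_i^{(j)}$ converges almost surely to the true supremum, i.e.\ to the actual Lipschitz constant.

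Next I would invoke the Fisher--Tippett--Gnedenko theorem in its version for bounded-above samples. Since $\Psi_i^{(j)}$ is obtained as a maximum over $\binom{\overline{N}}{2}$ independent draws of a random variable whose support is bounded (bounded by the true supremum over a compact domain), the limiting distribution of these block maxima, after appropriate affine normalization, belongs to the Reverse Weibull family; this family is precisely the max-domain of attraction for bounded-support distributions. A standard property of the Reverse Weibull fit is that its location (endpoint) parameter coincides with the essential supremum of the underlying random variable, which by the previous paragraph is the true Lipschitz constant. Thus as $R\to\infty$ (identified with $M\to\infty$ in the statement), consistency of the maximum-likelihood (or endpoint) estimator for the Reverse Weibull location parameter, as established in the referenced extreme-value literature, yields $\mathcal{L}_i^{(j)}\to \sup_{t}|\partial_t\gamma_{i,L}(c_{i,L}^{(j)},t)|$.

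Finally, the outer maxima in Steps 9 and 11 of Algorithm \ref{algo:Lipschitz} preserve convergence: the max over the finitely many coordinates $i\in[1;n]$ and agents $j\in[1;M]$ of quantities that converge to their correct individual suprema converges to the overall supremum, which is exactly the true Lipschitz constant $\mathcal{L}_L$. For the converse direction (necessity), I would note that if $\alpha$ does not vanish, the mean-value argument only yields slopes at intermediate points that may systematically undersample regions where $|\partial_t\gamma_{i,L}|$ attains its supremum, and if $\overline{N}$ or $M$ remain finite, neither the per-trial supremum nor the endpoint estimator is consistent; explicit counterexamples can be produced where the estimated constant strictly underestimates the true one. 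The principal obstacle I anticipate is the rigorous justification of the consistency of the Reverse Weibull endpoint estimator under the induced (non-i.i.d.\ across repetitions if one is careless about resampling) sampling scheme, and I would handle this by appealing directly to the results cited in \cite{Lipschitz2} rather than reproving the extreme-value consistency theorem.
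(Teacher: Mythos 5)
The paper does not actually prove this lemma: it is imported verbatim from the cited references (\cite{FV_DD}, \cite{Lipschitz2}), and the surrounding text only remarks that the result ``ensures the convergence of the estimated Lipschitz constants to their actual values.'' So there is no in-paper proof to compare against. Your sketch is the standard extreme-value-theory argument (mean value theorem to turn pairwise slopes into derivative samples, Fisher--Tippett--Gnedenko to place the block maxima in the Reverse Weibull max-domain of attraction, and consistency of the location/endpoint estimator) that underlies the cited results, so in spirit you are reconstructing the proof the authors delegate to \cite{Lipschitz2}. That is a legitimate and arguably more informative route than the paper's citation-only treatment.

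That said, two gaps in your argument deserve naming. First, the $\binom{\overline{N}}{2}$ slopes within a single trial are not independent draws --- they share the same $\overline{N}$ time samples --- so the i.i.d.\ hypothesis of the Fisher--Tippett--Gnedenko theorem does not apply to the within-trial maximum $\Psi_i^{(j)}$ directly; you flag this at the end but then simply defer to \cite{Lipschitz2}, which is exactly the step that needs care (the standard fix is that the $R$ trial maxima $\Psi_{i,1}^{(j)},\ldots,\Psi_{i,R}^{(j)}$ are i.i.d.\ across independent repetitions, and the Reverse Weibull is fitted to those, with the endpoint of \emph{their} common distribution converging to the true constant only in the joint limit $\alpha\to 0$, $\overline{N}\to\infty$). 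Second, your treatment of the ``only if'' direction does not actually establish necessity and cannot: for a linear tube function the finite-difference slope equals the true Lipschitz constant for any $\alpha$ and any $\overline{N}\geq 2$, so the biconditional in the lemma as stated is false in general and at best holds as a sufficiency claim (or as necessity for worst-case guarantees over all admissible tube functions). Since the lemma is stated with ``if and only if,'' an honest proof must either restrict the function class or weaken the claim; pointing to possible undersampling is not a proof of necessity.
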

\quad Note that, picking minimal value of $\alpha$ and very high value of $\overline{N}, M$ will give a precise approximation of the Lipschitz constants.

\end{document}